\documentclass[conference]{IEEEtran}
\usepackage{cite}
\usepackage{graphicx}
\usepackage{psfrag}
\usepackage{url}
\usepackage{amsmath}
\usepackage{array}
\usepackage{amssymb}
\usepackage{amsfonts}
\usepackage{epstopdf}

\newtheorem{lemma}{Lemma}

\newtheorem{remark}{Remark}
\newtheorem{definition}{Definition}
\newtheorem{theorem}{Theorem}

\usepackage{subfigure}

\title{Interference Alignment with Diversity for the $2 \times 2$ $X$-Network with three antennas }

\begin{document}

\author{
\authorblockN{Abhinav Ganesan and B. Sundar Rajan\\}
\IEEEauthorblockA{\small{Email: \{abhig\_88, bsrajan\}@ece.iisc.ernet.in}}
}

\maketitle
\thispagestyle{empty}	

\begin{abstract}

Interference alignment is known to achieve the maximum sum DoF of $\frac{4M}{3}$ in the $2 \times 2$ $X$-Network (i.e., two-transmitter (Tx) two-receiver (Rx) $X$-Network) with $M$ antennas at each node, as demonstrated by Jafar and Shamai. Recently, an Alamouti code based transmission scheme, which we call the Li-Jafarkhani-Jafar (LJJ) scheme, was proposed for the $2 \times 2$ $X$-Network with two antennas at each node. This scheme achieves a sum degrees of freedom (DoF) of $\frac{8}{3}$ and also a diversity gain of two when fixed finite constellations are employed at each Tx. In the LJJ scheme, each Tx required the knowledge of only its own channel unlike the Jafar-Shamai scheme which required global CSIT to achieve the maximum possible sum DoF of $\frac{8}{3}$. Bit error rate (BER) is an important performance metric when the coding length is finite.
This work first proposes a new  STBC for a three transmit antenna single user MIMO system. Building on this STBC, we extend the LJJ scheme to the $2 \times 2$ $X$-Network with three antennas at each node. Local channel knowledge is assumed at each Tx. It is shown that the proposed scheme achieves the maximum possible sum DoF of $4$. A diversity gain of $3$ is also guaranteed when fixed finite constellation inputs are used. 
\end{abstract}	

\section{Introduction}  \label{sec1}
Maximizing the rate\footnote{The definition of rate used here is in the sense of vanishing probability of error as given in Section $7.5$ of \cite{CoT-book}.} of transmission and minimizing the bit error rate have been intensely pursued in single user communication systems. Multiple-input multiple-output (MIMO) systems offer the potential to improve both and hence, are immensely popular. Bit error rate is an issue in the case of coding over a fixed number of time slots while achievable rate is a performance metric when coding over infinite time slots is allowed. So, the design of space-time block codes (STBC) in single user MIMO systems incorporated two properties as performance metrics - the information losslessness property of an STBC and the diversity gain offered by an STBC \cite{DTB, SRS}. The former assures that the maximum achievable rate is not sacrificed by introduction of the STBC block in the MIMO system while the latter assures a degree of reliability in the data when the code length is 
restricted 
and SNR-independent finite constellation input is used. 

Many recent works on multiuser communication systems, specifically on interference networks, have focused on sum-capacity optimal or approximate sum-capacity optimal transmission strategies. The notion of approximate sum-capacity is captured by the concept of degrees of freedom (DoF). The sum DoF of a Gaussian interference network is said to be $d$ if the sum-capacity can be written as $d~log_2 SNR + o(log_2 SNR)$ \cite{JaS}. Wireless $X$-Networks are a class of Gaussian interference networks with $K$ transmitters and $J$ receivers where every receiver demands an independent message from every transmitter so that there is a total of $KJ$ messages meant to be transmitted in the network. We shall denote an $X$ network with $M$ antennas at every node by $(K,J,M)-X$-Network. A sum DoF of $\lfloor{\frac{4M}{3}}\rfloor$ was shown to be achievable in a $(2,2,M)-X$-Network in \cite{MMK} while Jafar and Shamai in \cite{JaS} showed that a sum DoF of $\frac{4M}{3}$ is achievable using the idea of interference alignment 
(IA)\footnote{We shall henceforth call the transmission scheme proposed in \cite{JaS} as the JS scheme.}. Further, $\frac{4M}{3}$ is proven \cite{JaS} to be an outerbound on the sum DoF of $(2,2,M)-X$-Network which establishes $\frac{4M}{3}$ to be the sum DoF of $(2,2,M)-X$-Network. 

Recently, Alamouti codes were intelligently coupled with channel dependent precoding \cite{LJJ,LiJ} to achieve a sum-DoF of $\frac{8}{3}$ in the $(2,2,2)-X$-Network. Also, a diversity gain of two was assured. The transmission scheme in \cite{LJJ} shall be referred to as the LJJ scheme. In a sense, the performance metrics that the LJJ scheme highlights i.e., achieving the maximum sum DoF and a diversity gain that is strictly greater than one, is akin to the properties sought after in the design of STBCs for single user communication systems i.e., the information losslessness property and the diversity gain. Note that sum DoF is an approximate sum-capacity at high SNR. Thus, STBCs with information losslessness property that were sought after in single user systems is now translated to approximate sum-capacity lossless design of STBCs in the  $(2,2,2)-X$-Network. The design of STBCs that offer diversity gain in $(2,2,M)-X$-Network however comes with a crucial difference with respect to the single user scenario. 
Local 
channel knowledge (i.e., every transmitter knows only its own channel) is assumed in the LJJ scheme for the $(2,2,2)-X$-Network whereas the design of information lossless STBCs in the single user set-up does not assume CSIT. Similarly, an extension of the LJJ scheme for the $(2,2,4)-X$-Network \cite{AbR} using the Srinath-Rajan (SR) STBC \cite{SrR1} assumed local channel knowledge. Assumption of no CSIT would make things difficult. Even the JS scheme assumed global CSIT, i.e., knowledge of all the channel gains at all the transmitters, to achieve sum DoF of $\frac{4M}{3}$.

The challenge in extending the LJJ scheme to a general $(2,2,M)-X$-Network is to identify STBCs that could be applied in the $(2,2,M)-X$-Network with appropriate modifications. This was done for $M=4$ in \cite{AbR} where the SR STBC fitted nicely into the extended LJJ scheme. This work aims to extend the LJJ scheme to the $(2,2,3)-X$-Network. As in the LJJ scheme and the extended LJJ scheme for the $(2,2,4)-X$-Network using the SR STBC, the performance metrics are the achievable sum DoF and the diversity gain. The contributions of this work are summarized below. 

\begin{itemize}
 \item We propose an STBC that encodes $\frac{3}{2}$ complex symbols per channel use (cspcu) for a three transmit antenna single user MIMO system. We use this STBC in the extended LJJ scheme to achieve the maximum sum DoF of $4$ in the $(2,2,3)-X$-Network (Theorem \ref{thm2}, Section \ref{sec4}). Like the LJJ scheme, the proposed scheme assumes only local channel knowledge at the transmitters.
 \item We show that the proposed scheme guarantees a diversity gain of $3$ when fixed finite constellation inputs are used (Theorem \ref{thm1}, Section \ref{sec4}). Simulation results show that the diversity gain is strictly greater than $3$.
\end{itemize}

The paper is organized as follows. The next section formally introduces the system model. Section \ref{sec3} summarizes the JS scheme and the LJJ scheme. The proposed scheme is explained in Section \ref{sec4} where the DoF achievability and the diversity gain achieved are also proved. Section \ref{sec5} presents simulation results illustrating the performance of the proposed scheme. The paper concludes with Section \ref{sec6}.	

{\em Notations:}   The set of complex numbers is denoted by $\mathbb C$. The notation ${\cal CN}(0,\sigma^2)$ denotes the circularly symmetric complex Gaussian distribution with mean zero and variance $\sigma^2$. For a complex number $x$, the notation $\overline{x}$ denotes the conjugate of $x$. The real and imaginary parts of a complex number $a$ are denoted by $a^R$ and $a^I$ respectively. The identity matrix of size $n \times n$ is denoted by $I_n$. An all-zeros column vector is denoted by $\underline{0}$. The trace of a matrix $A$ is denoted by $\text{tr}(A)$. For an invertible matrix $A$, the notation $A^{-H}$ denotes the Hermitian of the matrix $A^{-1}$. The $i^\text{th}$ row and the $i^\text{th}$ column of a matrix $A$ are denoted by $A(i,:)$ and $A(:,i)$ respectively. The $i^\text{th}$ row, $j^\text{th}$ column element of a matrix $A$ is denoted by $A(i,j)$ unless mentioned 
otherwise. The Frobenius norm of a matrix $A$ is denoted by $||A||$. The Kronecker product of two matrices $A$ and $B$ is denoted by $A \otimes B$. A diagonal matrix with the diagonal entries given by $a_1, a_2, \cdots,a_n$ is denoted by $\text{diag}(a_1,a_2,\cdots,a_n)$. 

\section{System Model} \label{sec2}

The $(2,2,M)-X$-Network is shown in Fig. \ref{fig_sys_model}. The message transmitted by transmitter (Tx) $i$ to receiver (Rx) $j$ is represented by $W_{ij}$. 
 \begin{figure}[htbp]
\centering
\includegraphics[totalheight=3.1in,width=2.8in]{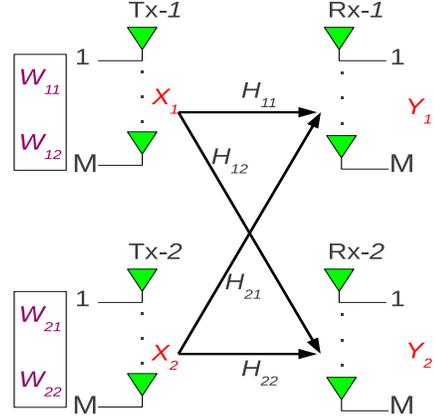}
\vspace{-1cm}
\caption{System Model.}
\label{fig_sys_model}
\end{figure} The input-output relation over $T$ time slots is given by
\begin{align}
\label{eqn-sys_model}
 Y_j=\sqrt{P}\sum_{i=1}^{2} H_{ij}X_i+N_j
\end{align}where, $Y_j \in \mathbb{C}^{M \times T}$ denotes the output symbol matrix at Rx-$j$, $X_i \in \mathbb{C}^{M \times T}$ denotes the input symbol matrix at Tx-$i$ such that $\mathbb{E}\left[\text{tr}\left(XX^H\right)\right]\leq T$, $H_{ij} \in \mathbb{C}^{M \times M}$ denotes the channel matrix between Tx-$i$ and Rx-$j$, $N_j \in \mathbb{C}^{M \times T}$ denotes the noise matrix whose entries are distributed as i.i.d. ${\cal CN}(0,1)$. The average power constraint at each of the transmitters is denoted by $P$. The channel gains are assumed to be constant for the length of the codeword, and the real and imaginary parts of the channel gains are distributed independently according to an arbitrary continuous distribution. Global CSIR is assumed throughout the paper, i.e., both the receivers have the knowledge of all the channel gains. 

\section{Review of some known transmission schemes in $(2,2,M)-X$-Network} \label{sec3}
In the first sub-section, the JS scheme is reviewed and in the second sub-section, the LJJ scheme for the $(2,2,2)-X$-Network is reviewed.

\subsection{Review of JS Scheme for $(2,2,3)-X$-Network} \label{subsec1}
 The JS scheme for $(2,2,3)-X$-Network aligns the interference symbols by precoding over a $3$-symbol extension of the channel, i.e., $T=3$. Every transmitter sends $3$ complex symbols to each receiver over $3$ channel uses so that a sum DoF of $4$ is attained. The input and the output symbols over a $3$-symbol extension of the channel are related by
\begin{align} \label{eqn-JS}
 Y'_j= \sqrt{\frac{3P}{2}}\sum_{i=1}^{2}H'_{ij} \left(\sum_{k=1}^{2}\frac{V_{ik}}{\text{tr}\left(V_{ik}V^H_{ik}\right)} X_{ik}\right) + N'_j
\end{align}where, $Y'_j \in \mathbb C^{9 \times 1}$ denotes the received symbol vector at Rx-$j$ over $3$ channel uses, {\small$H'_{ij}=\begin{bmatrix}
                                                                                                                            H_{ij} & \mathbf{0} & \mathbf{0}\\
															    \mathbf{0} & H_{ij} &\mathbf{0}\\
															    \mathbf{0} & \mathbf{0} & H_{ij}
                                                                                                                           \end{bmatrix}$}
denotes the effective channel matrix between Tx-$i$ and Rx-$j$ over $3$ channel uses, $V_{ik} \in \mathbb{C}^{9 \times 3}$ denotes a precoding matrix, $X_{ik}\in \mathbb C^{3 \times 1}$ denotes the symbol vector generated by Tx-$i$ meant for Rx-$k$, and $N'_j \in \mathbb{C}^{9 \times 1}$ denotes the Gaussian noise vector whose entries are distributed as i.i.d. ${\cal CN} (0,1)$. The entries of $X_{ik}$ take values from a set such that {\small$\mathbb{E}\left[X_{ik}X^H_{ik}\right]=I_3$}. The precoders $V_{ik}$ are chosen as given below.
{
\begin{align*}
& V_{11}=E^{F'} {V^{F'}_1}, ~~ V_{12}=E^{F'} {V^{F'}_2}, \\
&V_{21}={H'}^{-1}_{22}H'_{12}V_{11}, ~ ~~V_{22} = {H'}^{-1}_{21} H'_{11} V_{12}
\end{align*}}where, {\small$E^{F'} \in \mathbb{C}^{9 \times 9}$} denotes a matrix whose columns are the eigen vectors of the matrix {\small$F'={H'}^{-1}_{11}H'_{21}{H'}^{-1}_{22}H'_{12}$}, \mbox{{\small$V^{F'}_1=I_3 \otimes[1 ~1 ~0]^T$}}, and {\small$V^{F'}_2=I_3 \otimes[1 ~0 ~1]^T$}. Using the above choice of precoders, the interference symbols are aligned and (\ref{eqn-JS}) can be re-written as
\begin{align}
\nonumber
& Y'_1=\sqrt{\frac{3P}{2}}\left(H'_{11}V_{11}X_{11}+H'_{21}V_{21}X_{21}\right.\\
\nonumber
&\hspace{4cm}\left.+H_{11}V_{12}\left(X_{12}+X_{22}\right)\right)+N'_1\\
\label{eqn-JS_Aligned}
& Y'_2=\sqrt{\frac{3P}{2}}\left(H'_{12}V_{12}X_{12}+H'_{22}V_{22}X_{22}\right.\\
\nonumber
&\hspace{4cm}\left.+H_{12}V_{11}\left(X_{11}+X_{21}\right)\right)+N'_2.
\end{align}
The above described scheme is proved to achieve a sum DoF of $4$ in the $(2,2,3)-X$-Network almost surely when the channel matrix entries take values from a continuous probability distribution \cite{JaS}.

\subsection{Review of LJJ Scheme} \label{subsec2}
The LJJ scheme for the $(2,2,2)-X$-Network transmits two superposed Alamouti codes with appropriate precoding in three time slots (i.e., $T=3$) at every transmitter. The symbols meant for each receiver are transmitted through an Alamouti code as given by

{\small
\begin{align*}
X_1&=\sqrt{\frac{3P}{4}}\left(V_{11}\underbrace{\begin{bmatrix}
                          x^1_{11} & -\overline{x^{2}_{11}} & 0\\
			  x^2_{11} &  ~\overline{x^{1}_{11}} & 0
                         \end{bmatrix}}_{X_{11}} + V_{12}\underbrace{\begin{bmatrix}
                          0 & x^1_{12} & -\overline{x^{2}_{12}}\\
			  0 & x^2_{12} &  ~\overline{x^{1}_{12}} 
                         \end{bmatrix}}_{X_{12}} \right)\\
X_2&=\sqrt{\frac{3P}{4}}\left(V_{22}\underbrace{\begin{bmatrix}
                          x^1_{21} & -\overline{x^{2}_{21}} & 0\\
			  x^2_{21} &  ~\overline{x^{1}_{21}} & 0
                         \end{bmatrix}}_{X_{21}} + V_{12}\underbrace{\begin{bmatrix}
                          0 & x^1_{22} & -\overline{x^{2}_{22}}\\
			  0 & x^2_{22} &  ~\overline{x^{1}_{22}} 
                         \end{bmatrix}}_{X_{22}} \right)
\end{align*}}where, $x^k_{ij}$ takes values from a set such that {\small$\mathbb{E}\left[\left|x^k_{ij}\right|^2\right]=1$}. The matrices $X_{ij}$, as defined above, correspond to the symbols generated by Tx-$i$ meant for Rx-$j$. The matrix entries $x^{k}_{ij}$ denote the $k^{\text{th}}$ symbol generated by Tx-$i$ for Rx-$j$. The precoders $V_{ij}$ are chosen as 

{\small
\begin{align}
\nonumber
&V_{11}=\frac{H^{-1}_{12}}{\sqrt{\text{tr}\left(H^{-1}_{12}H^{-H}_{12}\right)}}, ~ V_{12}=\frac{H^{-1}_{11}}{\sqrt{\text{tr}\left(H^{-1}_{11}H^{-H}_{11}\right)}}\\
\label{eqn-precoders}
&V_{21}=\frac{H^{-1}_{22}}{\sqrt{\text{tr}\left(H^{-1}_{22}H^{-H}_{22}\right)}}, ~ V_{22}=\frac{H^{-1}_{21}}{\sqrt{\text{tr}\left(H^{-1}_{21}H^{-H}_{21}\right)}}.
\end{align}}The terms inside the square-roots above ensure that the transmitters meet the average power constraint. We briefly describe how the above choice of precoders and the use of Alamouti codes whose columns are juxtaposed with a zero column help to align the interference.
The output symbol matrix at Rx-$1$ is now given by
{
\begin{align*}
&Y_1=\sqrt{\frac{3P}{4}}H_{11}V_{11}X_{11} + \sqrt{\frac{3P}{4}}H_{21}V_{21}X_{21}  \\
&~~~~~~~~+\sqrt{\frac{3P}{4}}\begin{bmatrix}
                          0 & ax^1_{12} + bx^1_{22} & -a\overline{x^{2}_{12}}-b\overline{x^{2}_{22}}\\
			  0 & ax^2_{12} + bx^2_{22} &  ~a\overline{x^{1}_{12}}+b\overline{x^{1}_{22}} 
                         \end{bmatrix} + N_1
\end{align*}}where, $a=\frac{1}{\sqrt{\text{tr}\left(H^{-1}_{11}H^{-H}_{11}\right)}}$ and $b=\frac{1}{\sqrt{\text{tr}\left(H^{-1}_{21}H^{-H}_{21}\right)}}$. Let the effective channel matrices corresponding to the desired symbols from Tx-$1$ and Tx-$2$ to Rx-$1$ be denoted by $\hat{H}=H_{11}V_{11}$ and $\hat{G}=H_{21}V_{21}$ respectively. Define a matrix $Y' \in \mathbb C^{2 \times 3}$ whose first, second and third columns are given by

{\small
\begin{align}
Y'(:,1)=Y(:,1), ~Y'(:,2)=\overline{Y(:,1)}, ~Y'(:,3)=Y(:,3).
\end{align}}Similarly, define the matrix $N'_1$ obtained from $N_1$. The processed output symbols at Rx-$1$ (i.e., $Y'_1$) can be written as

{\footnotesize
\begin{align}
\nonumber
\underbrace{\begin{bmatrix}
{Y'}_1^T(1,:)\\
{Y'}_1^T(2,:)
\end{bmatrix}}_{Y''_1}=\sqrt{\frac{3P}{4}}&\begin{bmatrix}
\hat{h}_{11} & \hat{h}_{12} & \hat{g}_{11} & \hat{g}_{12} & 0 & 0\\
\overline{\hat{h}_{12}} & -\overline{\hat{h}_{11}} & \overline{\hat{g}_{12}} & -\overline{\hat{h}_{11}} & 1 & 0\\
0 & 0 & 0 & 0 & 0 & -1\\
\hat{h}_{21} & \hat{h}_{22} & \hat{g}_{21} & \hat{g}_{22} & 0 & 0\\
\overline{\hat{h}_{22}} & -\overline{\hat{h}_{21}} & \overline{\hat{g}_{22}} & -\overline{\hat{h}_{21}} & 0 & 1\\
0 & 0 & 0 & 0 & 1 & 0
\end{bmatrix}\begin{bmatrix}
x^{1}_{11}\\
x^{2}_{11}\\
x^{1}_{21}\\
x^{2}_{21}\\
I_1\\
I_2
\end{bmatrix}\\
\label{eqn-Y''_1}
&\hspace{1cm}+\underbrace{\begin{bmatrix}
{N'}_1^T(1,:)\\
{N'}_1^T(2,:)
\end{bmatrix}}_{N''_1}
\end{align}}where, $I_1=a\overline{x^1_{12}}+b\overline{x^1_{22}}$ and $I_2=a\overline{x^2_{12}}+b\overline{x^2_{22}}$, and $\hat{h}_{ij}$ and $\hat{g}_{ij}$ denote the entries of the matrices $\hat{H}$ and $\hat{G}$ respectively. Note that, when $\hat{h}_{ij}$ and $\hat{g}_{ij}$ are non-zero, the interference symbols $I_1$ and $I_2$ are aligned in a subspace linearly independent of the signal subspace. So, pre-multiplying the matrix $Y''_1$ (defined in (\ref{eqn-Y''_1})) by the zero-forcing matrix given by
\begin{align}
F=\begin{bmatrix}
1 & 0 & 0 & 0 & 0 & 0\\
0 & 1 & 0 & 0 & 0 & -1\\
0 & 0 & 1 & 0 & 1 & 0\\
0 & 0 & 0 & 1 & 0 & 0
\end{bmatrix}
\end{align} yields

{\small
\begin{align} \label{eqn-eff_ch_mat}
 FY''_1=\sqrt{\frac{3P}{4}}\underbrace{\begin{bmatrix}
\hat{h}_{11} & \hat{h}_{12} & \hat{g}_{11} & \hat{g}_{12} \\
\overline{\hat{h}_{12}} & -\overline{\hat{h}_{11}} & \overline{\hat{g}_{12}} & -\overline{\hat{h}_{11}} \\
\hat{h}_{21} & \hat{h}_{22} & \hat{g}_{21} & \hat{g}_{22} \\
\overline{\hat{h}_{22}} & -\overline{\hat{h}_{21}} & \overline{\hat{g}_{22}} & -\overline{\hat{h}_{21}} \\
\end{bmatrix}}_{R}\begin{bmatrix}
x^{1}_{11}\\
x^{2}_{11}\\
x^{1}_{21}\\
x^{2}_{21}\\
\end{bmatrix} +FN''_1.  
\end{align}}It is shown that the matrix $R$ is almost surely full rank and hence, a sum DoF of $\frac{8}{3}$ is achieved in the $(2,2,2)-X$-Network. When the symbols $x^k_{ij}$ take values from finite constellations, it is proved that a diversity gain of $2$ can be achieved for every $x^k_{ij}$, along with symbol-by-symbol decoding \cite{LiJ}.

The LJJ scheme is extended to the $(2,2,3)-X$-Network in the next section.

\section{Extended LJJ scheme for the $(2,2,3)-X$-Network} \label{sec4}
In this section, we first present a set of conditions on STBCs, which when coupled with LJJ-type precoding, could be used in a general $(2,2,M)-X$-Network to achieve a diversity gain of $M$. We then present an explicit construction of an STBC for the $(2,2,3)-X$-Network which achieves a diversity gain of $3$ and a sum DoF of $4$ with Gaussian input alphabets. Throughout this section, we shall focus only on decoding the desired symbols and interference cancellation at Rx-$1$. Similar signal processing is assumed to happen at Rx-$2$.

Consider a $M \times T'$ linear dispersion STBC denoted by $X'$ where $T'\geq M$ is an even integer. Linear dispersion STBCs are those which can be represented as $X'=\sum_{i=1}^{L}A^{iR}x^{iR}+A^{iI}x^{iI}$ where, $x^i \in {\cal S}$ represent the symbols that the STBC encodes for some finite constellation ${\cal S}$, and the matrices $A^{iR},A^{iI} \in \mathbb C^{M \times T'}$ are called dispersion matrices \cite{HaH}. Every transmitter uses an STBC with the same structure as that of $X'$ to transmit its message symbols to each of the receivers. Let the STBC corresponding to the symbols meant to be sent from Tx-$i$ to Rx-$j$ be denoted by $X'_{ij}$, $i,j=1,2$. Consider a matrix $X_{i1}$ formed by inserting a zero column to every third column of $X'_{i1}$ so that there are a total of $\frac{T'}{2}$ all-zero columns, i.e., a column of zeros is inserted after every two columns of $X'_{i1}$, starting from its third column. Also, consider a matrix $X_{i2}$ formed by inserting a zero column to every third column 
of $X'_{i2}$, but starting from its first column (i.e., the first column of $X_{i2}$ is an all-zero column), so that there are a total of $\frac{T'}{2}$ all-zero columns. Note that the matrices $X_{ij}$ are $M \times \frac{3T'}{2}$ matrices. The precoders used are the same as in the LJJ scheme, i.e., given by (\ref{eqn-precoders}). The received symbols at Rx-$1$ are given by

{\small
\begin{align*}
 Y_1&=\sqrt{c_1P}H_{11}V_{11}X_{11}+\sqrt{c_2P}H_{21}V_{21}X_{21}\\
 &+\sqrt{c_1P}aX_{12}+\sqrt{c_2P}bX_{22} + N_1
\end{align*}}where, $c_i$ is a normalizing constant which ensures that the power constraint at Tx-$i$ is met. Note that the received symbols at time instants $t=1,4,7,\cdots,\frac{3T'}{2}-2$ are interference-free because of the pattern of the zero columns in $X_{i2}$. The unintended symbols interfere at time instants $t=2,5,8,\cdots,\frac{3T'}{2}-1$, at Rx-$1$. We now narrow down on a desirable structure of the STBC $X'$ so that the interfering symbols can be canceled. Let $f_k:\mathbb{C}\rightarrow \mathbb{C}$ be a deterministic function such that $f_k(w)$ is distributed as ${\cal CN}(0,\sigma_k^2)$ when $w$ is distributed as ${\cal CN}(0,1)$, for $k=1,2,\cdots,\frac{MT'}{2}$. Let $\pi_p$ be a permutation of $\{1,2,\cdots,M\}$, for $p=1,3,\cdots, T'-1$. Suppose there exist functions $f_k$ such that 
\begin{align*}
 X'(:,p)+\begin{bmatrix}
	     f_{\frac{(p-1)M}{2}+1}(X'(\pi_p(1),p+1))\\
	     f_{\frac{(p-1)M}{2}+2}(X'(\pi_p(2),p+1))\\
	     \vdots\\
	     f_{\frac{(p-1)M}{2}+M}(X'(\pi_p(M),p+1))\\
         \end{bmatrix}=\underline{0}
\end{align*} for $p=1,3,\cdots, T'-1$. We call the above property of the STBC $X'$ as the column cancellation property.

We observe that the signal corresponding to the desired symbols is equal to zero at time instants $t=3,6,9,\cdots,\frac{3T'}{2}$, at Rx-$1$ because of the pattern of zero-columns in $X_{i1}$, for $i=1,2$. Thus, on the account of the column cancellation property of $X'$ clearly, the interference symbols at time instants $t=2,5,8,\cdots,\frac{3T'}{2}-1$ can be canceled using the interference symbols received at time instants $t=3,6,9,\cdots,\frac{3T'}{2}$, without affecting the desired symbols at Rx-$1$. Now, the relevant components of the noise vectors corresponding to the time instants $t=2,5,8,\cdots,\frac{3T'}{2}-1$, are distributed as i.i.d.  ${\cal CN}(0,1+\sigma_k^2)$, for $k=1,2,\cdots \frac{MT'}{2}$. Discarding the received symbols at time instants $t=3,6,9,\cdots,\frac{3T'}{2}$, the processed received symbols at Rx-$1$ which are now interference-free can be written as

{\small
\begin{align}
\label{eqn-after_int_cal}
Y'_1=\sqrt{c_1P}H_{11}V_{11}X'_{11}+\sqrt{c_2P}H_{21}V_{21}X'_{21}+N'_1
\end{align}}where, $X'_{ij}$ is obtained from $X_{ij}$ by dropping the all-zero columns. The following theorem states the condition on $X'$ under which ML decoding of $X'_{i1}$ from (\ref{eqn-after_int_cal}) yields a diversity gain of $M$.

Define the difference matrix $\triangle{X'^{k_1,k_2}_{ij}}$ by
\begin{align*}
 \triangle {X'^{k_1,k_2}_{ij}} = {X'^{k_1}_{ij}}-{X'^{k_2}_{ij}}
\end{align*}where, ${X'^{k_1}_{ij}}$ and ${X'^{k_2}_{ij}}$ denote two different realizations (i.e., $k_1 \neq k_2$) of the matrix $X'_{ij}$. We note that the finite constellations involved with the STBCs $X'_{ij}$ for different $(i,j)$ could be different.

\begin{theorem} \label{thm1}
If the channel matrix entries are distributed as  i.i.d. ${\cal CN}(0,1)$ then, the average pair-wise error probability $P_e$ for the distinct pairs of codewords {\small$\left({X'^{k_1}_{11}},{X'^{k_2}_{21}}\right)$} and {\small$\left({X'^{k'_1}_{11}},{X'^{k'_2}_{21}}\right)$} is upper bounded as

{\small\begin{align*}
 P_e\left({\left({X'^{k_1}_{11}},{X'^{k_2}_{21}}\right)\rightarrow\left({X'^{k'_1}_{11}},{X'^{k'_2}_{21}}\right)}\right) \leq c P^{-M}
\end{align*}}for some constant $c>0$, when the difference matrices $\triangle {X'^{k_1,k_2}_{i1}}$ are full rank for all $k_1 \neq k_2$ and for $i=1,2$.
\end{theorem}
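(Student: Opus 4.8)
The plan is a conditional pairwise-error-probability (PEP) argument whose punchline is that the normalization built into the precoders (\ref{eqn-precoders}) makes the bound channel-independent. Write $D_1:=\triangle {X'}^{k_1,k'_1}_{11}$ and $D_2:=\triangle {X'}^{k_2,k'_2}_{21}$; since the two pairs are distinct, at least one of $D_1,D_2$ is nonzero, and by hypothesis that one is full rank. Assume $D_1\neq\underline{0}$ (otherwise $D_2$ is nonzero and full rank and every step below holds with the two transmitters interchanged). First I would apply a Chernoff bound to the interference-free model (\ref{eqn-after_int_cal}): the entries of $N'_1$ are independent with variances $1$ or $1+\sigma_k^2$, none depending on $P$, so conditioned on the four channel matrices, joint ML decoding of $(X'_{11},X'_{21})$ has conditional PEP at most $\exp\!\big(-\tfrac{P}{4\sigma^2}\|\sqrt{c_1}\,H_{11}V_{11}D_1+\sqrt{c_2}\,H_{21}V_{21}D_2\|^2\big)$, with $\sigma^2:=\max_k(1+\sigma_k^2)$ and $c_1,c_2>0$ the (channel-independent) power normalizers.

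Next I would condition on $H_{12},H_{21},H_{22}$ and average out $H_{11}$. Given these, $\sqrt{c_2}\,H_{21}V_{21}D_2$ is a fixed matrix, while the $M$ rows of $\Phi:=H_{11}V_{11}$ are i.i.d.\ ${\cal CN}(\underline{0},\Sigma)$ with $\Sigma=V_{11}^{H}V_{11}=H_{12}^{-H}H_{12}^{-1}/\text{tr}(H_{12}^{-1}H_{12}^{-H})$. The observation that drives the proof is that the normalization in (\ref{eqn-precoders}) forces $\text{tr}(\Sigma)=1$ for \emph{every} realization of $H_{12}$. The squared Frobenius norm in the exponent splits over the rows of $\Phi$, so averaging over $H_{11}$ factors into $M$ single-row integrals of the form $\mathbb{E}_g\big[\exp(-\tfrac{Pc_1}{4\sigma^2}\|g\,\Sigma^{1/2}D_1+u\|^2)\big]$ with $g\sim{\cal CN}(\underline{0},I_M)$ and $u$ a fixed row vector (from the rescaled interference term). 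Completing the square in each integral — the fixed $u$ only lowers the value — bounds each factor by $\det\!\big(I_M+\tfrac{Pc_1}{4\sigma^2}\Sigma^{1/2}D_1D_1^{H}\Sigma^{1/2}\big)^{-1}$, hence the $H_{11}$-averaged conditional PEP is at most $\det\!\big(I_M+\tfrac{Pc_1}{4\sigma^2}\Sigma^{1/2}D_1D_1^{H}\Sigma^{1/2}\big)^{-M}$.

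Finally I would strip off the channel dependence. Since $\det(I_M+A)\ge 1+\text{tr}(A)$ for $A\succeq0$, and $\text{tr}(\Sigma^{1/2}D_1D_1^{H}\Sigma^{1/2})=\text{tr}(D_1D_1^{H}\Sigma)\ge\lambda_{\min}(D_1D_1^{H})\,\text{tr}(\Sigma)=\lambda_{\min}(D_1D_1^{H})>0$ — here $\text{tr}(\Sigma)=1$ is used and positivity of $\lambda_{\min}(D_1D_1^{H})$ is exactly the full-rank hypothesis on $D_1$ — the $H_{11}$-averaged conditional PEP is at most $\big(1+\tfrac{Pc_1}{4\sigma^2}\lambda_{\min}(D_1D_1^{H})\big)^{-M}$. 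This no longer involves any channel matrix, so it bounds $P_e$ after the remaining expectation over $H_{12},H_{21},H_{22}$, giving $P_e\le cP^{-M}$ with $c=\big(4\sigma^2/(c_1\lambda_{\min}(D_1D_1^{H}))\big)^{M}$.

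I expect the real obstacle to be this last expectation, specifically keeping the constant finite. The tempting shortcut is to bound each determinant below by $\det(A)$ instead of $1+\text{tr}(A)$, which would give the stronger-looking exponent $P^{-M^2}$; but that bound is vacuous, because its constant is proportional to $\mathbb{E}_{H_{12}}[(\det\Sigma)^{-M}]$ with $\det\Sigma=|\det H_{12}|^{-2}\|H_{12}^{-1}\|^{-2M}$, and this expectation is infinite once $M\ge2$: although $\text{tr}(\Sigma)=1$ always, $\Sigma$ becomes nearly rank-deficient with non-negligible probability whenever $H_{12}$ is ill-conditioned (unavoidable, since $V_{11}\propto H_{12}^{-1}$). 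The way past this is to ask only for diversity $M$ and to use nothing about $\Sigma$ beyond $\text{tr}(\Sigma)=1$, the single invariant the precoder normalization guarantees — so that the bound becomes deterministic before the last expectation is taken. Recognizing that the trace lower bound on $\det(I_M+A)$ is the tool matched to that invariant is the heart of the argument.
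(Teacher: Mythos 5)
Your proof is correct, and it follows the same overall strategy as the paper's: bound the conditional PEP by a Gaussian-tail/Chernoff expression in $\|\sqrt{c_1}H_{11}V_{11}\triangle X'_{11}+\sqrt{c_2}H_{21}V_{21}\triangle X'_{21}\|^2$, discard the second transmitter's contribution, average over $H_{11}$ conditioned on $H_{12}$, and then exploit the normalization of $V_{11}=H_{12}^{-1}/\sqrt{\text{tr}(H_{12}^{-1}H_{12}^{-H})}$ to make the resulting bound channel-independent (which is exactly why only diversity $M$, not $M^2$, is claimed). The two key steps are executed with different tools, and yours are arguably cleaner. To drop transmitter $2$, the paper writes the exponent as a quadratic form with covariance $K'=K'_1+K'_2$ and invokes Weyl's inequalities to replace the eigenvalues of $K'$ by those of $K'_1$; you instead absorb the Tx-$2$ term as a fixed shift $u$ in the row-wise Gaussian integral and note that completing the square only decreases the value — an elementary observation that avoids the eigenvalue-interlacing machinery. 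To strip the $H_{12}$-dependence, the paper runs through several SVDs to reach $\prod_j(1+\tfrac{c_1P\lambda_M(\triangle X'_{11})\lambda_j(V_{11})}{2\sigma^2})^{-M}$ and then keeps only the $j=1$ factor via the deterministic bound $\lambda_1(V_{11})\geq 1/M$ (discarding the other $M-1$ factors by bounding their eigenvalues below by zero); you use $\det(I_M+A)\geq 1+\text{tr}(A)$ together with $\text{tr}(D_1D_1^H\Sigma)\geq\lambda_{\min}(D_1D_1^H)\,\text{tr}(\Sigma)$ and $\text{tr}(\Sigma)=1$. These are really the same invariant — $\lambda_{\max}(\Sigma)\geq\text{tr}(\Sigma)/M=1/M$ is how the paper packages it — but your route reaches the deterministic bound in one line and makes explicit \emph{why} the determinant lower bound $\det(I+A)\geq\det(A)$ must be avoided (the resulting constant $\mathbb{E}_{H_{12}}[(\det\Sigma)^{-M}]$ diverges), a point the paper leaves implicit. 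Your constant $\bigl(4\sigma^2/(c_1\lambda_{\min}(D_1D_1^H))\bigr)^M$ matches the paper's $c=(2\sigma^2M)^M/(c_1^M\lambda_M^M(\triangle X'_{11}))$ up to the Chernoff-versus-Q-function factor and the $1/M$ loss from their eigenvalue bound; in both cases one finally takes the maximum over the finitely many codeword pairs.
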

\begin{proof}
 The proof is a generalization of the proof of Theorem $4$ in \cite{AbR}. The proof is given in Appendix \ref{appen_thm1}.
\end{proof}

An STBC $X'$ for $M=3$ which possesses the column cancellation property is given in (\ref{eqn-M=3_STBC}). 
{\begin{figure*}
\begin{align} \label{eqn-M=3_STBC}
   X'= \begin{bmatrix}
        x^{1R}+jx^{3I} & -x^{2R}+jx^{4I} &e^{j\theta}\left(x^{5R}+jx^{6I}\right) & e^{j\theta}\left(-x^{3R}+jx^{1I}\right) \\
	x^{2R}+jx^{4I} &  x^{1R}-jx^{3I} & e^{j\theta}\left(x^{4R}+jx^{2I}\right) & e^{j\theta}\left(x^{5R}-jx^{6I}\right) \\
	e^{j\theta}\left(x^{6R}+jx^{5I}\right) & e^{j\theta}\left(-x^{6R}+jx^{5I}\right)& x^{3R}+jx^{1I} & -x^{4R}+jx^{2I} \\
     \end{bmatrix}
\end{align}
\hrule
\end{figure*}}The matrices $X_{i1}$ and $X_{i2}$ in the $(2,2,3)-X$-Network are given by (\ref{eqn-Abhi_3tx_Code_for_X_Ch1}) and (\ref{eqn-Abhi_3tx_Code_for_X_Ch2}) respectively. It is assumed that the symbols $x^k_{ij}$ take values from a finite constellation ${\cal S}_{ij}$ whose average energy is equal to one. The constants $c_1$ and $c_2$ are given by $c_1=c_2=\frac{3}{4}$.
{\begin{figure*}
\begin{align} \label{eqn-Abhi_3tx_Code_for_X_Ch1} 
   X_{i1}= \begin{bmatrix}
        x^{1R}_{i1}+jx^{3I}_{i1} & -x^{2R}_{i1}+jx^{4I}_{i1} & 0 &e^{j\theta}\left(x^{5R}_{i1}+jx^{6I}_{i1}\right) & e^{j\theta}\left(-x^{3R}_{i1}+jx^{1I}_{i1}\right) & 0 \\
	x^{2R}_{i1}+jx^{4I}_{i1} &  x^{1R}_{i1}-jx^{3I}_{i1} & 0 & e^{j\theta}\left(x^{4R}_{i1}+jx^{2I}_{i1}\right) & e^{j\theta}\left(x^{5R}_{i1}-jx^{6I}_{i1}\right)& 0 \\
	e^{j\theta}\left(x^{6R}_{i1}+jx^{5I}_{i1}\right) & e^{j\theta}\left(-x^{6R}_{i1}+jx^{5I}_{i1}\right) & 0 & x^{3R}_{i1}+jx^{1I}_{i1} & -x^{4R}_{i1}+jx^{2I}_{i1}& 0 \\
	\end{bmatrix}
\end{align}
\hrule
\end{figure*}}
{\begin{figure*}
\begin{align} \label{eqn-Abhi_3tx_Code_for_X_Ch2} 
   X_{i2}= \begin{bmatrix}
        0 & x^{1R}_{i2}+jx^{3I}_{i2} & -x^{2R}_{i2}+jx^{4I}_{i2} & 0 &e^{j\theta}\left(x^{5R}_{i2}+jx^{6I}_{i2}\right) & e^{j\theta}\left(-x^{3R}_{i2}+jx^{1I}_{i2}\right) \\
	0 & x^{2R}_{i2}+jx^{4I}_{i2} &  x^{1R}_{i2}-jx^{3I}_{i2} & 0 & e^{j\theta}\left(x^{4R}_{i2}+jx^{2I}_{i2}\right) & e^{j\theta}\left(x^{5R}_{i2}-jx^{6I}_{i2}\right)\\
	0 & e^{j\theta}\left(x^{6R}_{i2}+jx^{5I}_{i2}\right) & e^{j\theta}\left(-x^{6R}_{i2}+jx^{5I}_{i2}\right) & 0 & x^{3R}_{i2}+jx^{1I}_{i2} & -x^{4R}_{i2}+jx^{2I}_{i2}\\
     \end{bmatrix}
\end{align}
\hrule
\end{figure*}}

The interference cancellation at Rx-$1$ is done as follows. The matrix $Y'_1 \in \mathbb{C}^{3 \times 4}$ obtained by processing $Y_1 \in \mathbb{C}^{3 \times 6}$ is given below.

{\small
\begin{align}
\nonumber
& Y'_1(:,1)=Y_1(:,1),\\
\nonumber
& Y'_1(:,3)=Y_1(:,4),\\
\nonumber
& Y'_1(1,2)=Y_1(1,2) - \overline{Y_1(2,3)},\\ 
\nonumber
& Y'_1(2,2)=Y_1(2,2) + \overline{Y_1(1,3)},\\
\label{eqn-second_col3}
& Y'_1(3,2)=Y_1(3,2) + e^{j2\theta}\overline{Y_1(3,3)}, \\
\nonumber
& Y'_1(1,4)=Y_1(1,5) - e^{j2\theta}\overline{Y_1(2,6)},\\ 
\nonumber
& Y'_1(2,4)=Y_1(2,5) + e^{j\theta}\overline{Y_1(3,6)},\\
\nonumber
& Y'_1(3,4)=Y_1(3,5)  + e^{j\theta}\overline{Y_1(1,6)}.		
\end{align}}Note that the LJJ scheme for $(2,2,2)-X$-Network also involves similar interference cancellation procedure though it was explained through zero-forcing of aligned interference in Section \ref{subsec2}.

It is observed that the proposed scheme encodes a total of $12$ complex symbols at every transmitter in $6$ time slots and hence, a sum throughput of $4$ cspcu is achieved in the $(2,2,2)-X$-Network.

From Theorem \ref{thm1}, to show that ML decoding of $X_{i1}$ from $Y'_1$ given by (\ref{eqn-after_int_cal}) yields a diversity gain of $3$, we need to prove that for any finite constellation input there always exists $\theta$ such that the difference matrix ${\triangle X'^{k_1,k_2}_{i1}}$ is of rank $3$, for all $k_1 \neq k_2$. Towards that end, we have the following definition from \cite{ZaR}.

\begin{definition}\cite{ZaR}
  The Coordinate Product Distance (CPD) between any two signal points $u=u^R+ju^I$ and $v=v^R+jv^I$, for $u \neq v$, in a finite constellation ${\cal S}$ is defined as 
\begin{align*}
 CPD(u,v)=\left|u^R-v^R\right|\left|u^I-v^I\right|
\end{align*}and the minimum of this value among all possible pairs is defined as the CPD of ${\cal S}$.
\end{definition}We assume that each symbol $x^{k}_{ij}$ takes values from a finite constellation whose CPD is non-zero, for all $i,j,k$. If a finite constellation has a zero CPD, it can be rotated appropriately so that the resulting constellation has a non-zero CPD \cite{ZaR}.

\begin{lemma} \label{lem1}
There always exists $\theta \in [0,2\pi]$ such that the difference matrix ${\triangle X'^{k_1,k_2}_{ij}}$ is full-rank (i.e., rank $=3$), for all $k_1 \neq k_2$ and $i,j=1,2$.
\end{lemma}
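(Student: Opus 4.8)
The plan is to convert the full-rank requirement into the non-vanishing of a low-degree polynomial in $z:=e^{j\theta}$ and then invoke finiteness of the constellations. Since $X'$ in (\ref{eqn-M=3_STBC}) is a linear dispersion code, the difference matrix $\triangle X'^{k_1,k_2}_{ij}$ has exactly the form (\ref{eqn-M=3_STBC}) with each symbol $x^k$ replaced by $\Delta x^k:=x^{k,k_1}_{ij}-x^{k,k_2}_{ij}$; as $k_1\neq k_2$ the tuple $(\Delta x^1,\dots,\Delta x^6)$ is nonzero, and since every ${\cal S}_{ij}$ has nonzero CPD, any nonzero $\Delta x^k$ satisfies $\Delta x^{kR}\neq 0$ and $\Delta x^{kI}\neq 0$. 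Hence a ``mixed'' scalar such as $\Delta x^{aR}+j\Delta x^{bI}$ vanishes only when $\Delta x^a=\Delta x^b=0$, and a sum of squared magnitudes of such scalars vanishes only when all the symbols involved are zero; this single consequence of the CPD hypothesis is all the proof uses about the constellations.

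Fix a nonzero difference tuple and write $\triangle X'(z)=\triangle X'_a+z\,\triangle X'_b$, the splitting of (\ref{eqn-M=3_STBC}) into its un-rotated entries and its $e^{j\theta}$-scaled entries, which is affine in $z$. The $3\times 4$ matrix $\triangle X'(z)$ has rank $3$ iff $\det(\triangle X'(z)\,\triangle X'(z)^H)\neq 0$, and by the Cauchy--Binet formula this Gram determinant equals $\sum_S |M_S(z)|^2$, the sum over the four column-triples $S\subset\{1,2,3,4\}$ of the squared magnitudes of the $3\times 3$ minors $M_S(z)$ of $\triangle X'(z)$. I would use the two triples $S=\{1,2,3\}$ and $S=\{1,2,4\}$ which keep both columns $1$ and $2$ --- the only columns whose $e^{j\theta}$ factor sits in the third row. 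A short computation shows that in $M_{\{1,2,3\}}(z)$ and $M_{\{1,2,4\}}(z)$ every monomial carries $z$ to the power $0$ or $2$, so each is of the form $z^2(\ast)+(\ast\ast)$ with $z^0$-coefficient (up to sign) $Q_{123}=(\Delta x^{3R}+j\Delta x^{1I})\bigl(|\Delta x^{1R}+j\Delta x^{3I}|^2+|\Delta x^{2R}+j\Delta x^{4I}|^2\bigr)$ and $Q_{124}=(\Delta x^{4R}-j\Delta x^{2I})\bigl(|\Delta x^{1R}+j\Delta x^{3I}|^2+|\Delta x^{2R}+j\Delta x^{4I}|^2\bigr)$. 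If at least one of these two minors is not the zero polynomial, it is a degree $\le 2$ polynomial in $z$, hence vanishes for at most two values of $z=e^{j\theta}$, so $\triangle X'(e^{j\theta})$ has rank $3$ off a set of at most two $\theta$. If \emph{both} minors vanish identically, then $Q_{123}=Q_{124}=0$, which by the CPD property forces $\Delta x^1=\Delta x^2=\Delta x^3=\Delta x^4=0$; but then $\triangle X'(z)$ collapses to a matrix whose three rows are supported on the pairwise-disjoint column sets $\{3\}$, $\{4\}$, $\{1,2\}$ and are all nonzero --- again by the CPD property, since now $\Delta x^5\neq 0$ or $\Delta x^6\neq 0$ --- so it has rank $3$ for \emph{every} $\theta$. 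Thus in either case the set of ``bad'' $\theta$ for this difference tuple is finite.

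The step needing care --- and the only genuine obstacle --- is the one just sketched: singling out the pair of minors that retain both un-rotated columns (which is what makes their $z^0$-coefficients factor through the single CPD-sensitive quantity $|\Delta x^{1R}+j\Delta x^{3I}|^2+|\Delta x^{2R}+j\Delta x^{4I}|^2$), and then dispatching the residual degenerate tuples $\Delta x^1=\dots=\Delta x^4=0$ by direct inspection of the matrix rather than through another minor. To finish: each ${\cal S}_{ij}$ is finite, so over all $i,j=1,2$ and all $k_1\neq k_2$ only finitely many difference tuples arise; the union of their (finite) exceptional $\theta$-sets is finite, so any $\theta\in[0,2\pi]$ outside it makes $\triangle X'^{k_1,k_2}_{ij}$ of rank $3$ simultaneously for all $k_1\neq k_2$ and all $i,j=1,2$, which is the assertion. (Equivalently, one may take $e^{j\theta}$ transcendental over the subfield of $\mathbb C$ generated by $j$ and all the constellation points, so that no nonzero minor polynomial can vanish at $z=e^{j\theta}$.)
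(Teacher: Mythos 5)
Your proof is correct and follows the same core strategy as the paper's: for each of the finitely many nonzero difference tuples, exhibit a $3\times 3$ minor that is a not-identically-zero polynomial of degree at most $2$ in $e^{j\theta}$, conclude that only finitely many $\theta$ are bad for that tuple, and take the union over the finitely many tuples arising from the finite constellations. The execution differs in detail: the paper works with the column triples $\{1,2,3\}$ and $\{2,3,4\}$ and runs a four-way case analysis keyed on whether $(\triangle x^{1},\triangle x^{3})$ and $(\triangle x^{5},\triangle x^{6})$ vanish, whereas you choose the triples $\{1,2,3\}$ and $\{1,2,4\}$ precisely so that both $z^0$-coefficients factor through the single quantity $\left|\triangle x^{1R}+j\triangle x^{3I}\right|^2+\left|\triangle x^{2R}+j\triangle x^{4I}\right|^2$, which collapses the analysis to the one degenerate case $\triangle x^{1}=\cdots=\triangle x^{4}=0$, dispatched by inspecting the disjoint row supports. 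Both arguments use the CPD hypothesis in exactly the same way (a symbol's real and imaginary differences vanish together), so this is a mild but genuine streamlining rather than a new method; your closing observation that one could instead take $e^{j\theta}$ transcendental over the field generated by the constellation points is a valid alternative endgame not present in the paper.
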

\begin{proof}
 Proof is given in Appendix \ref{appen_lem1}.
\end{proof}

From the proof of Lemma $1$, it can be observed that selecting $\theta$ randomly, for instance from the uniform distribution in $[0,2\pi]$, would ensure that the difference matrix ${\triangle X'^{k_1,k_2}_{ij}}$ is full-rank, for all $k_1 \neq k_2$ and $i,j=1,2$. Thus, $\theta$ can be chosen easily to yield a diversity gain of $3$.

We now prove that using Gaussian input alphabets and coding across time, it is possible to achieve a sum DoF of $4$ using the proposed scheme.

\begin{theorem} \label{thm2}
The proposed scheme can achieve a sum DoF of $4$ with symbol-by-symbol decoding, for any $\theta \in [0,2\pi]$.
\end{theorem}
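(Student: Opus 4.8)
The plan is to show that the processed, interference-free channel in~(\ref{eqn-after_int_cal}) supports $12$ complex symbols per transmitter over $6$ channel uses with a coefficient matrix that is almost surely full column rank, so that each symbol is recovered with a single-symbol (actually pair-of-symbol) ML decoder achieving one DoF, giving a sum DoF of $12/6 + 12/6 = 4$ once both receivers are counted. Since the scheme already transmits the full $4\cdot T$ symbols per block and the outer bound on the sum DoF of the $(2,2,3)$-$X$-network is $4$, it suffices to exhibit almost-sure decodability with bounded noise enhancement.

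First I would make precise the structure of~(\ref{eqn-after_int_cal}): at Rx-$1$ the desired signal is $\sqrt{c_1P}\,H_{11}V_{11}X'_{11}+\sqrt{c_2P}\,H_{21}V_{21}X'_{21}+N'_1$, where $X'_{11}$ and $X'_{21}$ are the $3\times 4$ STBC blocks of the form~(\ref{eqn-M=3_STBC}) and $N'_1$ has i.i.d.\ entries of bounded (SNR-independent) variance by the column-cancellation argument preceding the theorem. Vectorizing (stacking real and imaginary parts, and conjugating the second and fourth columns exactly as in the LJJ reduction~(\ref{eqn-eff_ch_mat})), I would write this as $\underline{y} = \sqrt{P}\,\mathcal{R}\,[\,\underline{x}_{11}^T\ \underline{x}_{21}^T\,]^T + \underline{n}$, where $\underline{x}_{i1}\in\mathbb{C}^{6}$ collects the six symbols of block $i$ and $\mathcal{R}$ is a (scaled) $12\times 12$ matrix determined by $\hat H=H_{11}V_{11}$, $\hat G=H_{21}V_{21}$ and $\theta$. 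The DoF claim reduces to: $\mathcal{R}$ is almost surely nonsingular, and $\mathcal{R}^{-1}$ has operator norm that does not scale with $P$ (which holds since $\mathcal{R}$ does not depend on $P$ and is a.s.\ nonsingular, so ZF/ML gives $12$ interference-free AWGN streams each carrying one DoF).

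The core step is therefore proving $\det\mathcal{R}\neq 0$ almost surely. Here I would use the standard continuity/polynomial-identity argument: $\det\mathcal{R}$ is a polynomial in the real and imaginary parts of the entries of $\hat H,\hat G$ (equivalently of $H_{11},H_{21}$, since $V_{11},V_{21}$ are fixed rational functions of $H_{12},H_{22}$ and enter only through $\hat H,\hat G$), and a polynomial either vanishes identically or vanishes only on a measure-zero set under the assumed continuous channel distribution. So it is enough to exhibit one choice of $(\hat H,\hat G)$ — for a given admissible $\theta$ — for which $\mathcal{R}$ is nonsingular; a convenient choice is $\hat G=0$ (or $\hat H=I_3$, $\hat G=I_3$ suitably scaled), which decouples $\mathcal{R}$ into blocks whose determinants are, up to the $e^{j\theta}$ factors, essentially those of the single-user STBC~(\ref{eqn-M=3_STBC}); invertibility of that $6\times 6$ block is exactly the linear-independence of the dispersion matrices of $X'$, which I would verify directly from~(\ref{eqn-M=3_STBC}). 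I would also note that the Gaussian input alphabet matters only for the converse side (meeting the DoF definition with $\log$-det rate), and that "symbol-by-symbol decoding" here means the $12\times 12$ system splits, after the zero-forcing in~(\ref{eqn-second_col3}), into small independent subsystems (pairs), so no joint ML over all symbols is needed.

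The main obstacle I anticipate is bookkeeping rather than depth: correctly tracking the conjugation pattern and the $e^{j\theta}$, $e^{j2\theta}$ factors through the processing~(\ref{eqn-second_col3}) so that the $12\times 12$ real-equivalent matrix $\mathcal{R}$ is written down unambiguously, and then checking that the specialization used to show $\det\mathcal{R}\not\equiv 0$ is actually attainable as a limit of genuine channel realizations (equivalently, that the "bad" specializations do not coincide with the zero set of the normalizing constants $c_1,c_2$ or with $\det V_{ij}=0$). Once the matrix is pinned down, nonsingularity for generic $\theta$ and generic channels is immediate from the polynomial argument, and the bounded noise enhancement — hence one DoF per symbol and sum DoF $4$ — follows with no SNR-dependence.
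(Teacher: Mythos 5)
Your overall strategy is the same as the paper's: pass to the interference-free system (\ref{eqn-after_int_cal}), vectorize with the appropriate conjugations, observe that the determinant of the effective transfer matrix is a polynomial in the entries of $\hat H=H_{11}V_{11}$ and $\hat G=H_{21}V_{21}$ (which are themselves non-degenerate rational functions of the independently, continuously distributed channel matrices), and exhibit a single witness realization making that determinant nonzero, so that zero-forcing yields independent single-symbol streams. The paper organizes this as two separate $6\times 6$ complex systems (the symbols $p^1,p^2,p^3$ seen in the first two processed columns, and $p^4,p^5,p^6$ in the last two) rather than one $12\times 12$ real system; that difference is immaterial.

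The step that would fail is your choice of witness. Both specializations you propose are singular configurations: with $\hat G=0$ the six columns of the transfer matrix multiplying the Tx-$2$ symbols vanish identically, and with $\hat H=\hat G=I_3$ (or any common scaling) the column block carrying $p^k_{11}$ coincides with the block carrying $p^k_{21}$; in either case the determinant is zero, so neither point certifies that the polynomial is not identically zero. The witness must couple the two transmitters non-degenerately, and — because the theorem claims the result for \emph{every} $\theta\in[0,2\pi]$, not generic $\theta$ — it must work for each fixed $\theta$. The paper handles this by inserting $\theta$-dependent entries into the witness, e.g.\ $H=\left[\begin{smallmatrix}1&0&0\\0&1&1\\1&0&1\end{smallmatrix}\right]$, $G=\left[\begin{smallmatrix}0&0&0\\1&0&0\\1&-e^{2j\theta}&1\end{smallmatrix}\right]$ giving $|R|=-2$ for the first subsystem, and $H=I_3$, $G=\left[\begin{smallmatrix}0&0&0\\1&0&0\\0&1&3-e^{j\theta}\end{smallmatrix}\right]$ giving $|S|=3\left(3-e^{j\theta}\right)$ for the second. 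Your remaining worry about attainability of the witness is resolved exactly by the rational-function observation above, so once a valid witness is substituted your argument closes.
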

\begin{proof}
Proof is given in Appendix \ref{appen_thm2}.
\end{proof}

\begin{remark}
 The STBC used in the proposed scheme is inspired by the SR STBC \cite{SrR1} (given in (\ref{eqn-SR_STBC})) which also possesses the column cancellation property and hence, used in $(2,2,4)-X$-Network \cite{AbR}. 
 {\begin{figure*}
\begin{align} \label{eqn-SR_STBC} 
   X'= \begin{bmatrix}
        x^{1R}+jx^{3I} & -x^{2R}+jx^{4I} & e^{j\theta}\left(x^{5R}+jx^{7I}\right) & e^{j\theta}\left(-x^{6R}+jx^{8I}\right) \\
	x^{2R}+jx^{4I} &  x^{1R}-jx^{3I} & e^{j\theta}\left(x^{6R}+jx^{8I}\right) & e^{j\theta}\left(x^{5R}-jx^{7I}\right) \\
	e^{j\theta}\left(x^{7R}+jx^{5I}\right) & e^{j\theta}\left(-x^{8R}+jx^{6I}\right) & x^{3R}+jx^{1I} & -x^{4R}+jx^{2I} \\
	e^{j\theta}\left(x^{8R}+jx^{6I}\right) & e^{j\theta}\left(x^{7R}-jx^{5I}\right) & x^{4R}+jx^{2I} &  x^{3R}-jx^{1I}\\
     \end{bmatrix}
\end{align}
\hrule
\end{figure*}}However, it can be observed from that (\ref{eqn-M=3_STBC}) cannot be trivially obtained from the SR STBC by deleting one of its rows because the two STBCs are meant to offer different throughputs (in cspcu). 
\end{remark}

Some simulation plots are shown in the next section, comparing the performance of the proposed scheme with the JS scheme. 

\section{Simulation Results} \label{sec5}
The bit error rate (BER) of the proposed scheme using QPSK\footnote{Gray labeling is used on all the constellations in this paper.} input constellations at all the transmitters in the $(2,2,3)-X$-Network is plotted in Fig. \ref{fig-QPSK}. We set $\theta=\frac{\pi}{4}$ in the proposed scheme, and the constellations are rotated by an angle $\phi=\frac{\text{tan}^{-1}(2)}{2}$ to ensure a non-zero CPD \cite{ZaR}. A brute force search in the software MATLAB was done to ensure that $\theta=\frac{\pi}{4}$ gives full-rank difference matrices ${\triangle X'^{k_1,k_2}_{ij}}$, for all $k_1 \neq k_2$ and $i,j=1,2$ so that Theorem \ref{thm1} is valid for this case. The transmitted symbols  are decoded using the sphere decoder \cite{ViB}. It is observed from Fig. \ref{fig-QPSK} that the proposed scheme achieves a diversity gain that is strictly greater than $3$. We refer to the proposed scheme with $\theta=0$ and constellation rotation angle $\phi=0$ as the Alamouti Repetition (AR) scheme. In the AR scheme, we note that 
the 
difference matrices ${\triangle X'^{k_1,k_2}_{ij}}$, for $i,j=1,2$, are 
not full-rank for some $k_1\neq k_2$ for any input constellation with independent real and imaginary parts (for e.g., QAM constellation). Thus, Theorem \ref{thm1} is not applicable in the case of AR scheme. In the JS scheme, MAP decoding of the desired symbols from (\ref{eqn-JS_Aligned}) reduces to ML decoding of all the symbols at high values of $P$ \cite{RaG}, i.e.,

{\small\begin{align*}
&(\hat{X}_{11},\hat{X}_{21}) = \arg \hspace{-0.7cm}\min_{X_{11},X_{21},X_{12}+X_{22}}\left|\left|Y'_1-\sqrt{\frac{3P}{2}}\left(H'_{11}V_{11}X_{11}\right.\right.\right.\\
&\hspace{3.2cm}\left.\left.\left.+H'_{21}V_{21}X_{21}\right)+H'_{11}V_{12}\left(X_{12}+X_{22}\right)\right|\right|^2\\
&(\hat{X}_{12},\hat{X}_{22}) = \arg \hspace{-0.7cm}\min_{X_{12},X_{22},X_{11}+X_{21}}\left|\left|Y'_2-\sqrt{\frac{3P}{2}}\left(H'_{12}V_{12}X_{12}\right.\right.\right.\\
&\hspace{3.2cm}\left.\left.\left.+H'_{22}V_{22}X_{22}\right)+H'_{12}V_{11}\left(X_{11}+X_{21}\right)\right|\right|^2.
\end{align*}}Hence, as noted in \cite{RaG} sphere decoder can be used when QAM constellations are employed. 

It is seen from Fig. \ref{fig-QPSK} that the proposed scheme with $\theta=\frac{\pi}{4}$ comfortably outperforms the AR scheme and the JS scheme in terms of BER.

 {\begin{figure}[htbp]
\centering
\includegraphics[totalheight=2.2in,width=3.7in]{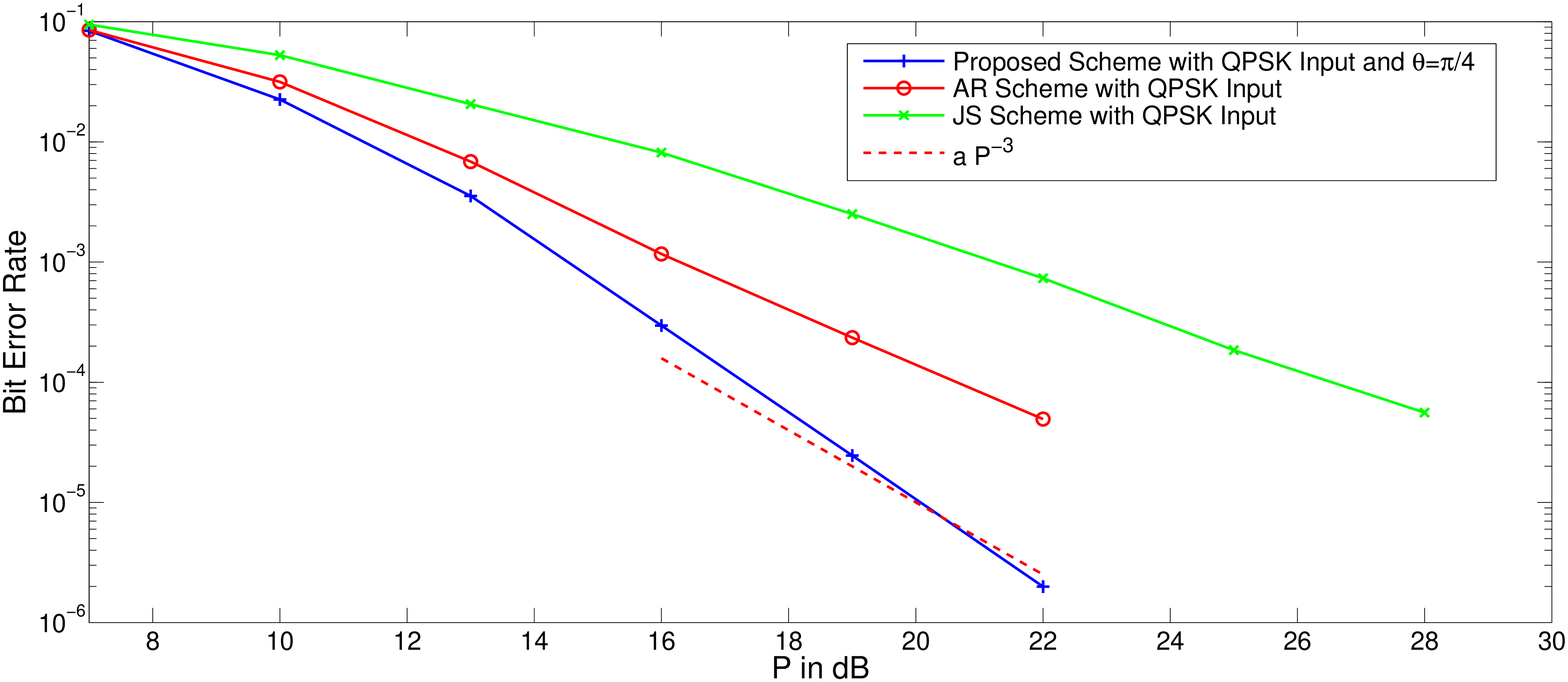}
\caption{Plot of BER vs P (in dB) for QPSK input constellations where the throughput is given by $4$ bits/sec/Hz per transmitter.  The dotted red line marked by $aP^{-3}$ is plotted for some positive real number $a$.}
\label{fig-QPSK}
\vspace{-0.5cm}
\end{figure}}
 {\begin{figure}[htbp]
\centering
\includegraphics[totalheight=2.2in,width=3.7in]{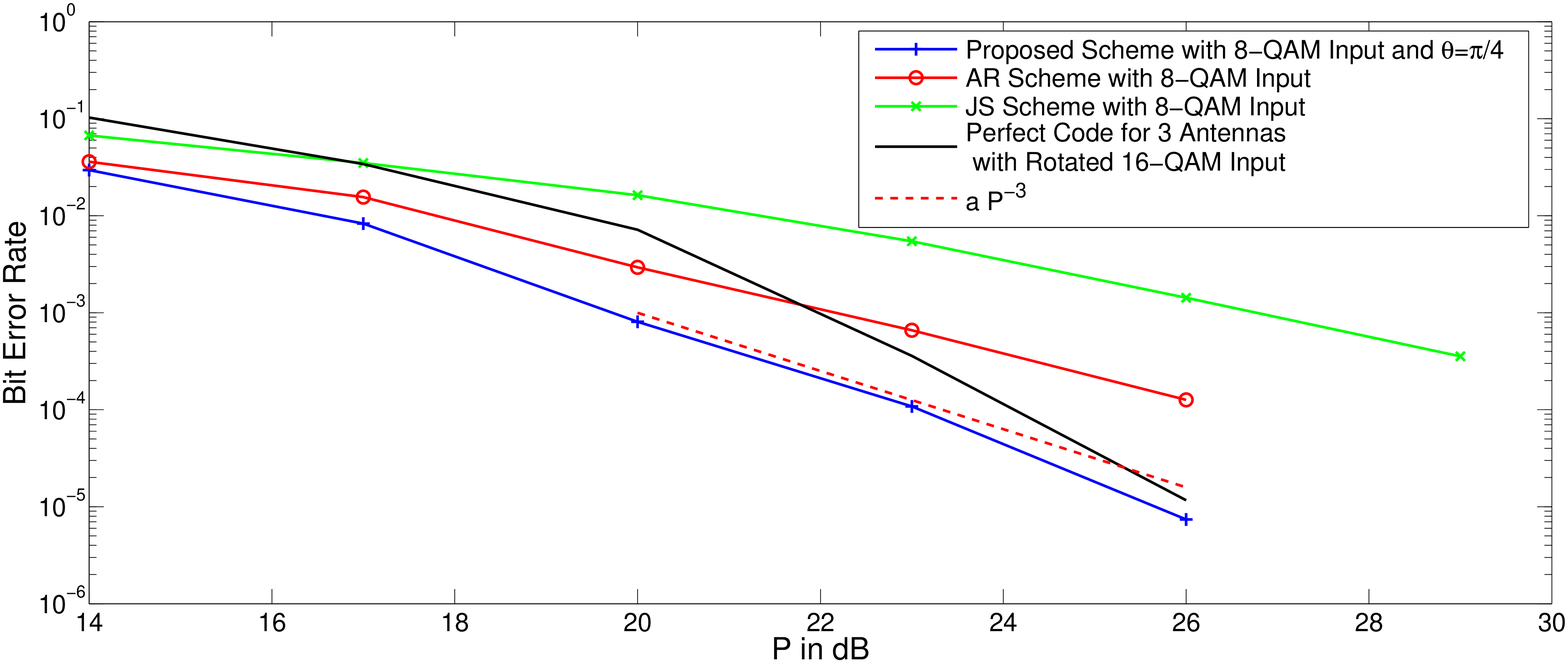}
\caption{Plot of BER vs P (in dB) where the proposed scheme with $\theta=\frac{\pi}{4}$, the AR scheme and the JS scheme use $8$-QAM input constellations, and time sharing with Perfect code for $3$ antennas uses $16$-QAM input constellations so that the throughput is given by $6$ bits/sec/Hz per transmitter. The dotted red line marked by $aP^{-3}$ is plotted for some positive real number $a$.}
\label{fig-8_QAM}
\end{figure}}

We now compare the BER performance of the proposed scheme with a time-sharing scheme in the $(2,2,3)-X$-Network. In order to equate the throughputs of the schemes, we use $8$-QAM constellation for the proposed scheme and $16$-QAM constellation for the time sharing scheme which is expected to transmit at a throughput of $3$ cspcu. Since we assumed CSIT in the proposed scheme, it is only fair to compare the proposed scheme with a time-sharing scheme using CSIT. Most of the research on single user MIMO systems with CSIT are focused on selecting a precoder that maximizes the minimum Euclidean distance at the receiver. One such work for the three antenna system (i.e. a system with $3$-Tx and $3$-Rx antennas) is \cite{NBS}. However, the work in \cite{NBS} is infeasible to implement in practice for the $16$-QAM case because of the huge number of difference matrices that need to be evaluated. Low complexity precoding techniques with CSIT for maximizing the minimum Euclidean distance at the receiver for single user 
systems 
in \cite{SrR2,MVHC} focus only on systems with even number of antennas.  So, we choose to compare the proposed scheme with a time-sharing scheme using Perfect STBC \cite{ORBV} which does not employ CSIT. Perfect STBC guarantees a diversity gain of $9$ in three antenna single user systems when the input constellations are chosen as a QAM constellation whose real and imaginary parts are post-multiplied by a lattice generator matrix given by $G=\begin{bmatrix}1 & 0\\
 \frac{1}{2} & \frac{\sqrt{3}}{2}
\end{bmatrix}$ \cite{ORBV}. The proposed scheme is simulated with $8$-QAM constellation rotated by an angle $\phi=\frac{\text{tan}^{-1}(2)}{2}$ to ensure a non-zero CPD, and $\theta=\frac{\pi}{4}$ for which it is verified that the difference matrices ${\triangle X'^{k_1,k_2}_{ij}}$, for all $k_1 \neq k_2$ and $i,j=1,2$, are full-rank. It is noted from Fig. \ref{fig-8_QAM} that the proposed scheme with $8$-QAM constellation and $\theta=\frac{\pi}{4}$ outperforms the Perfect STBC with $16$-QAM constellation rotated by the matrix $G$ in the chosen range of $P$. We also note that the AR scheme and the JS scheme using $8$-QAM constellations perform poorly compared to the proposed scheme. Once again Theorem \ref{thm1} is validated by Fig. \ref{fig-8_QAM}, where it is observed that the proposed scheme with $\theta=\frac{\pi}{4}$ achieves a diversity gain that is strictly greater than $3$.

\section{Conclusion} \label{sec6}
We extended the LJJ scheme to the $(2,2,3)-X$-Network using a newly proposed STBC for a three transmit antenna single user MIMO system. We showed that the proposed scheme achieves a diversity gain of at least $3$ with fixed finite constellation inputs and a sum DoF of $4$ with only local CSIT requirement. The JS scheme, on the other hand, required global CSIT to achieve the maximum sum DoF of $4$. 

If we could obtain STBCs with full-rank difference matrices, and with the proposed column cancellation property and a throughput of $\frac{M}{2}$ cspcu, the LJJ scheme could be extended to a general $(2,2,M)-X$-Network to achieve a diversity gain of $M$. However, this is not the only challenge. The main challenge is the decoding complexity for systems with $M>4$. The proposed scheme in this paper and the extended LJJ scheme using the SR STBC for the $(2,2,4)-X$-Network \cite{AbR} enjoyed the decoding of the transmitted symbols using the sphere decoder. However, with higher dimensions, even sphere decoding would be complicated because the choice of sphere radius becomes more critical in determining the feasibility of decoding in higher dimensional systems. Thus, extending the LJJ scheme to systems with $M>4$ should also consider decoding complexity as a criterion while designing STBCs with column cancellation property.

\appendices
\section{Proof of Theorem \ref{thm1}}
\label{appen_thm1}
\begin{IEEEproof}
Consider a modified system where a Gaussian noise matrix is added to (\ref{eqn-after_int_cal}) so that the entries of the effective noise matrix in (\ref{eqn-after_int_cal}) are distributed as i.i.d. ${\cal CN}(0,1+\max_{k}\sigma_k^2)$. Let $\sigma^2=1+\max_{k}\sigma_k^2$. The average pair-wise error probability for this modified system is given by

{\footnotesize\begin{align} 
\nonumber
& P_e\left({\left({X'^{k_1}_{11}},{X'^{k_2}_{21}}\right)\rightarrow\left({X'^{k'_1}_{11}},{X'^{k'_2}_{21}}\right)}\right)= \\
\label{eqn-PEP1}
& \mathbb{E}\left[Q\left(P\sqrt{\left|\left|\sqrt{c_1}H_{11}V_{11}\triangle X'_{11} + \sqrt{c_2}H_{21}V_{21}\triangle X'_{21}\right|\right|^2/2\sigma^2}\right)\right]
\end{align}}where, {\small$\triangle X'_{11}={X'^{k_1}_{11}}-{X'^{k'_1}_{11}}$}, and {\small$\triangle X'_{21}={X'^{k_2}_{21}}-{X'^{k'_2}_{21}}$}. Note that either {\small$\triangle X'_{11} \neq 0, \triangle X'_{21}=0$} or {\small$\triangle X'_{11} = 0, \triangle X'_{21}\neq 0$} or {\small$\triangle X'_{11} \neq 0, \triangle X'_{21}\neq 0$}. We shall prove the statement of the theorem only for the cases where {\small$\triangle X'_{11} \neq 0$} (i.e., the cases {\small$\triangle X'_{11} \neq 0, \triangle X'_{21}=0$} and {\small$\triangle X'_{11} \neq 0, \triangle X'_{21}\neq 0$}), and the proof for the case {\small$\triangle X'_{11} = 0, \triangle X'_{21}\neq 0$} is similar.

The Frobenius norm in (\ref{eqn-PEP1}) can be re-written as (\ref{eqn-H'}) (given at the top of the next page).
\begin{figure*} \begin{align}
\label{eqn-H'}
\left|\left|\sqrt{c_1}H_{11}V_{11}\triangle X'_{11} + \sqrt{c_2}H_{21}V_{21}\triangle X'_{21}\right|\right|^2 =&
 \left[ \underbrace{\left(\sqrt{c_1}{\triangle X'^T_{11}} V_{11}^T \otimes I_M \right)vec(H_{11})+\left(\sqrt{c_2}{\triangle X'^T_{21}} V_{21}^T \otimes I_M \right)vec(H_{21})}_{H'}\right]^H \times\\
 \nonumber
& \left[ \left(\sqrt{c_1}{\triangle X'^T_{11}} V_{11}^T \otimes I_M \right)vec(H_{11})+\left(\sqrt{c_2}{\triangle X'^T_{21}} V_{21}^T \otimes I_M \right)vec(H_{21})\right].
\end{align} \hrule
\end{figure*}Note that, conditioned on $H_{12}$ and $H_{22}$, the vector {\small$H'$} defined in (\ref{eqn-H'}) is a Gaussian vector with mean zero and covariance matrix $K=K'\otimes I_M$, where 

{\footnotesize \begin{align*}  
K'=c_1\left({\triangle X'^T_{11}} V_{11}^T\right)\left({\triangle X'^T_{11}} V_{11}^T\right)^H \hspace{-0.2cm}+c_2\left({\triangle X'^T_{21}} V_{21}^T\right)\left({\triangle X'^T_{21}} V_{21}^T\right)^H.
\end{align*}}{\begin{figure*}
 \footnotesize\begin{align}
\nonumber
&\mathbb{E}\left[Q\left(\sqrt{P\left|\left|\sqrt{c_1}H_{11}V_{11}\triangle X'_{11} + \sqrt{c_2}H_{21}V_{21}\triangle X'_{21}\right|\right|^2/2\sigma^2}\right)\right]\\
\label{eqn-div1}
&=\mathbb{E}_{H_{12},H_{22}}\left[\mathbb{E}_{H_{11},H_{21}|{H_{12},H_{22}}}\left[Q\left(\sqrt{P\left|\left|\sqrt{c_1}H_{11}V_{11}\triangle X'_{11} + \sqrt{c_2}H_{21}V_{21}\triangle X'_{21}\right|\right|^2/2\sigma^2}\right)\right]\right]\\
\label{eqn-div2}
&=\mathbb{E}_{H_{12},H_{22}}\left[\mathbb{E}_{H''|{H_{12},H_{22}}}\left[Q\left(\sqrt{P'\frac{H''^H H''}{2\sigma^2}}\right)\right]\right]=\mathbb{E}_{H_{12},H_{22}}\left[\mathbb{E}_{H_1,H_2,\cdots,H_M|{H_{12},H_{22}}}\left[Q\left(\sqrt{P'\frac{\sum_{i=1}^{M}\left|\left|K'^{\frac{1}{2}}H_i\right|\right|^2}{2\sigma^2}}\right)\right]\right]\\
\label{eqn-div4}
&=\mathbb{E}_{H_{12},H_{22}}\left[\mathbb{E}_{H_1,H_2,\cdots,H_M|{H_{12},H_{22}}}\left[Q\left(\sqrt{P'\frac{\sum_{i=1}^{M}\text{tr}\left(H_i^H {K'^{\frac{1}{2}}}^H K'^{\frac{1}{2}} H_i \right)}{2\sigma^2}}\right)\right]\right]\\
\label{eqn-div5}
&=\mathbb{E}_{H_{12},H_{22}}\left[\mathbb{E}_{H_1,H_2,\cdots,H_M|{H_{12},H_{22}}}\left[Q\left(\sqrt{P'\frac{\sum_{i=1}^{M}\text{tr}\left(H_i^H \Lambda H_i \right)}{2\sigma^2}}\right)\right]\right]\\
\label{eqn-div8}
&=\mathbb{E}_{H_{12},H_{22}}\left[\mathbb{E}_{H_1,H_2,\cdots,H_M|{H_{12},H_{22}}}\left[Q\left(\sqrt{P'\frac{\sum_{i=1}^{M}\sum_{j=1}^{T'} \lambda_j(K') |H_i(j)|^2 }{2\sigma^2}}\right)\right]\right]\\
\label{eqn-div9}
&\leq\mathbb{E}_{H_{12},H_{22}}\left[\mathbb{E}_{H_1,H_2,\cdots,H_M|{H_{12},H_{22}}}\left[Q\left(\sqrt{P'\frac{\sum_{i=1}^{M}\sum_{j=1}^{T'} \lambda_j(K'_1) |H_i(j)|^2 }{2\sigma^2}}\right)\right]\right]\\
\label{eqn-div10}
&=\mathbb{E}_{H_{12}}\left[\mathbb{E}_{H'_1,H'_2,\cdots,H'_M|{H_{12}}}\left[Q\left(\sqrt{P'\frac{\sum_{i=1}^{M}\text{tr}\left(H'^H_i \Lambda'_1 H'_i \right)}{2\sigma^2}}\right)\right]\right]\\
\label{eqn-div11}
&=\mathbb{E}_{H_{12}}\left[\mathbb{E}_{H'_1,H'_2,\cdots,H'_M|{H_{12}}}\left[Q\left(\sqrt{P'\frac{\sum_{i=1}^{M}\text{tr}\left(\left(V_1^HH'_i\right)^H {\Lambda'_1} \left(V_1^H H'_i\right) \right)}{2\sigma^2}}\right)\right]\right]\\
\label{eqn-div12}
&=\mathbb{E}_{H_{12}}\left[\mathbb{E}_{H'_1,H'_2,\cdots,H'_M|{H_{12}}}\left[Q\left(\sqrt{P'\frac{\sum_{i=1}^{M}\text{tr}\left(H'^H_i \left(V_1 {{\Lambda'_1}^{\frac{1}{2}}}^HU_1^H\right) ~\left(U_1 {\Lambda'_1}^{\frac{1}{2}} V_1^H\right) H'_i \right)}{2\sigma^2}}\right)\right]\right]\\
\label{eqn-div13}
&=\mathbb{E}_{H_{12}}\left[\mathbb{E}_{H'_1,H'_2,\cdots,H'_M|{H_{12}}}\left[Q\left(\sqrt{P'\frac{\sum_{i=1}^{M}\left|\left|{K'_1}^{\frac{1}{2}}H'_i\right|\right|^2}{2\sigma^2}}\right)\right]\right]\\
\label{eqn-div14}
&=\mathbb{E}_{H_{12}}\left[\mathbb{E}_{H'_1,H'_2,\cdots,H'_M|{H_{12}}}\left[Q\left(\sqrt{P'\frac{\sum_{i=1}^{M} H'^H_i{V^T_{11}}^H \left(\triangle X'_{11} {\triangle X'_{11}}^H\right)^TV^T_{11}H'_i}{2\sigma^2}}\right)\right]\right]\\
\label{eqn-div15}
&=\mathbb{E}_{H_{12}}\left[\mathbb{E}_{H'_1,H'_2,\cdots,H'_M|{H_{12}}}\left[Q\left(\sqrt{P'\frac{\sum_{i=1}^{M} H'^H_i\left(\left(V_{11}U_{\triangle X'_{11}}\right)^T\right)^H \Lambda'_{\triangle X'_{11}} \left(V_{11}U_{\triangle X'_{11}}\right)^T H'_i}{2\sigma^2}}\right)\right]\right]
\end{align}\hrule
\end{figure*}}{\begin{figure*}
 \footnotesize\begin{align}
 \label{eqn-div16}
&\leq \mathbb{E}_{H_{12}}\left[\mathbb{E}_{H'_1,H'_2,\cdots,H'_M|{H_{12}}}\left[Q\left(\sqrt{P' \lambda_M\left(\triangle X'_{11}\right)\frac{\sum_{i=1}^{M} H'^H_i\left(\left(V_{11}U_{\triangle X'_{11}}\right)^T\right)^H \left(V_{11}U_{\triangle X'_{11}}\right)^T H'_i}{2\sigma^2}}\right)\right]\right]\\
\label{eqn-div17}
&= \mathbb{E}_{H_{12}}\left[\mathbb{E}_{H'_1,H'_2,\cdots,H'_M|{H_{12}}}\left[Q\left(\sqrt{P' \lambda_M\left(\triangle X'_{11}\right)\frac{\sum_{i=1}^{M} H'^H_i\left(V^T_{11}\right)^H V^T_{11} H'_i}{2\sigma^2}}\right)\right]\right]\\
\label{eqn-div18}
&= \mathbb{E}_{H_{12}}\left[\mathbb{E}_{H'_1,H'_2,\cdots,H'_M|{H_{12}}}\left[Q\left(\sqrt{P' \lambda_M\left(\triangle X'_{11}\right)\frac{\sum_{i=1}^{M} H'^H_i U_{V_{11}} \Lambda_{V_{11}} U^H_{V_{11}} H'_i}{2\sigma^2}}\right)\right]\right]\\
\label{eqn-div19}
&= \mathbb{E}_{H_{12}}\left[\mathbb{E}_{H_1,H_2,\cdots,H'_M|{H_{12}}}\left[Q\left(\sqrt{P' \lambda_M\left(\triangle X'_{11}\right)\frac{\sum_{i=1}^{M} H'^H_i U_{V_{11}} \Lambda_{V_{11}} U^H_{V_{11}} H'_i}{2\sigma^2}}\right)\right]\right]\\
\label{eqn-div20}
&= \mathbb{E}_{H_{12}}\left[\mathbb{E}_{H_1,H_2,\cdots,H'_M|{H_{12}}}\left[Q\left(\sqrt{P' \lambda_M\left(\triangle X'_{11}\right)\frac{\sum_{i=1}^{M} \left(U_{V_{11}}^H H'_i\right)^H  \Lambda_{V_{11}} U^H_{V_{11}} H'_i}{2\sigma^2}}\right)\right]\right]\\
\label{eqn-div21}
&\stackrel{(a)}{\leq} \mathbb{E}_{H_{12}}\left[\frac{1}{\prod_{j=1}^M\left(1+\frac{c_1P\lambda_M(\triangle X'_{11})\lambda_j(V_{11})}{2\sigma^2}\right)^M}\right] \stackrel{(b)}{<} \frac{1}{\left(1+\frac{c_1P\lambda_M(\triangle X'_{11})}{2\sigma^2M}\right)^M} \stackrel{(c)}{\approx} cP^{-M}
\end{align}
\hrule
\end{figure*}}In other words, when the successive elements of {\small$H'$} are grouped in blocks of $T'$ entries each, the blocks are distributed i.i.d. as Gaussian matrix with zero mean and covariance matrix given by $K'$. Since $K'$ is a positive semi-definite Hermitian matrix, let the eigen decomposition of the matrix $K'$ be given by $K'=U \Lambda U^H$ where, $U$ is a $T' \times T'$ unitary matrix formed by the eigen vectors of $K'$, and {\small$\Lambda=\text{diag}\left(\lambda_1(K'),\lambda_2(K'),\cdots,\lambda_{T'}(K')\right)$} denotes the matrix whose diagonal entries are ordered eigen values of $K'$ with {\small$\lambda_1(K')\geq\lambda_2(K')\geq \cdots \geq\lambda_{T'}(K')\geq 0$}. Denote a square-root of {\small$K'$} by {\small$K'^{\frac{1}{2}}$}, i.e., {\small$K'=K'^{\frac{1}{2}}{K'^{\frac{1}{2}}}^H$} where, {\small$K'^{\frac{1}{2}}=U \Lambda^{\frac{1}{2}}$}. The vector $H'$ is now statistically equivalent to the following vector

{\small\begin{align*}
 H'' = \begin{bmatrix}
       K'^{\frac{1}{2}}H_1\\
        K'^{\frac{1}{2}}H_2\\
	 \vdots\\
	   K'^{\frac{1}{2}}H_M\\
      \end{bmatrix}
\end{align*}}where, $H_i \in \mathbb{C}^{T' \times 1}$, $i=1,2,\cdots,M$, are Gaussian vectors whose entries are distributed as i.i.d. ${\cal CN}(0,1)$. Now, (\ref{eqn-PEP1}) can be successively re-written as in (\ref{eqn-div1})-(\ref{eqn-div10}) where, (\ref{eqn-div2}) follows from the statistical equivalence between $H'$ and $H''$, (\ref{eqn-div4}) follows from the fact that $||A||^2= \text{tr}(A^HA)$, and (\ref{eqn-div5}) follows from the definition of $K'^{\frac{1}{2}}$. Now, define {\small$K'_1=c_1\left({\triangle X'^T_{11}} V_{11}^T\right)\left({\triangle X'^T_{11}} V_{11}^T\right)^H$} and {\small$K'_2=c_2\left({\triangle X'^T_{21}} V_{21}^T\right)\left({\triangle X'^T_{21}} V_{21}^T\right)^H$} so that {\small$K'=K'_1+K'_2$}. Let $\lambda_j(K'_1)$ denote the eigen values of $K'_1$ in non-increasing order from $j$$=$$1$ to $j$$=$$T'$. Using Weyl's inequalities \footnote{Weyl's inequalities relate the eigen values of sum of two Hermitian matrices with the eigen values of the individual matrices.} (see 
Section III.2, pp. $62$ of \cite{Bha}), we have {\small$\lambda_j(K'_1) \leq \lambda_j(K')$}, $j=1,2,\cdots,T'$. Thus, we have the inequality (\ref{eqn-div9}) from (\ref{eqn-div8}) where, $H_i(j)$ denotes the $j^{\text{th}}$ entry of the vector $H_i$. Let $K'_1=U_1 \Lambda_1 U_1^H$ denote the eigen decomposition of $K'_1$, where\footnote{The diagonal elements of $\Lambda_1$ are ordered in a non-increasing order.} {\small$\Lambda_1=\text{diag}(\lambda_1(K'_1),\lambda_2(K'_1),\cdots,\lambda_M(K'_1),0,\cdots,0) \in \mathbb C^{T' \times T'}$}, and $U_1$ is a unitary matrix composed of eigen vectors of $K'_1$. The last $(T'-M)$ eigen values of $\Lambda_1$ are zero because $T'\geq M$ and the matrix $\triangle X'_{11}$ is of size $M \times T'$. Let $H'_i$ represent the first $M$ components of the $T'$-length vector $H_i$ and let {\small$\Lambda'_1=\text{diag}(\lambda_1(K'_1),\lambda_2(K'_1),\cdots,\lambda_M(K'_1))$}. Equation (\ref{eqn-div10}) follows from the fact that the argument inside the Q-function in (\ref{eqn-div9}) is independent of $H_{22}$ and the fact that the last $(T'-M)$ eigen values of $K'_1$ are zero. Let the singular value decomposition of $\triangle X'^T_{11} V^T_{11}$ be given by $\triangle X'^T_{11} V^T_{11}=U_1 {\Lambda'_1}^{\frac{1}{2}} V^H_1$, where
\begin{align*}
 {\Lambda'_1}^{\frac{1}{2}}=\begin{bmatrix}
                            \text{diag}\left(\sqrt{\lambda_1(K'_1)},\sqrt{\lambda_2(K'_1)},\cdots,\sqrt{\lambda_M(K'_1)}\right)\\
                            \mathbf{0}_{(T'-M)\times M}
                          \end{bmatrix},
\end{align*} $U_1\in \mathbb C^{T' \times T'}$ and $V_1\in \mathbb C^{M \times M}$ are unitary matrices. We observe that ${{\Lambda'_1}^{\frac{1}{2}}}^H{\Lambda'_1}^{\frac{1}{2}}=\Lambda'_1$. Note that $\triangle X'^T_{11} V^T_{11}$ is a square-root of $K'_1$ and hence, we shall denote this by ${K'_1}^{\frac{1}{2}}$. Now, (\ref{eqn-div11}) follows from the fact that the distribution of $H'_i$ is invariant to multiplication by the unitary matrix $V^H_1$, and using straight-forward simplifications we obtain (\ref{eqn-div14}). Now, let the eigen decomposition of {\small$\triangle X'_{11}\triangle X'^H_{11}$} be given by {\small$\triangle X'_{11}\triangle X'^H_{11}=U_{\triangle X'_{11}} \Lambda_{\triangle X'_{11}} U^H_{\triangle X'_{11}}$} where, {\small$\Lambda_{\triangle X'_{11}}$} denotes the eigen value matrix whose eigen values in non-increasing order are given by $\lambda_j\left(\triangle X'_{11}\right)$, $j=1,2,\cdots,M$. Note that $\lambda_M\left(\triangle X'_{11}\right)>0$ as it is assumed that $\triangle X'_{11}$ is full rank. Now, substitution of this eigen decomposition in (\ref{eqn-div14}) gives (\ref{eqn-div15}). The inequality (\ref{eqn-div16}) follows from the fact that $\lambda_M\left(\triangle X'_{11}\right)$ is the minimum eigen value of $\triangle X'_{11}$, and (\ref{eqn-div17}) follows from {\small$V_{11}$} being equal to {\small$\frac{H^{-1}_{12}}{\sqrt{\text{tr}\left(H^{-1}_{12}H^{-H}_{12}\right)}}$} and the fact that 
the distribution of $V_{11}$ is invariant to multiplication by the unitary matrix $U_{\triangle X'_{11}}$ (because $H_{12}$ is Gaussian distributed). Using the eigen decomposition of $\left(V^T_{11}\right)^H V_{11}=U_{V_{11}} \Lambda_{V_{11}} U_{V_{11}}$ and some straight-forward techniques involved in evaluating diversity as in \cite{TSC}, we obtain (\ref{eqn-div21})$(a)$. Now, note that the eigen values of $V_{11}$ are given by 
\begin{align*}
\lambda_j\left(V_{11}\right)=\frac{\frac{1}{\lambda_{M+1-j}\left(H_{12}\right)}}{\sum_{j=1}^{M}\frac{1}{\lambda_j\left(H_{12}\right)}} 
\end{align*}where, $\lambda_j\left(H_{12}\right)$ denote the eigen values of $H_{12}H^H_{12}$ in non-increasing order from $j=1$ to $j=M$. Thus, $\lambda_j\left(V_{11}\right)$ can be lower bounded as
\begin{align*}
 &\lambda_j\left(V_{11}\right) \geq \frac{\frac{1}{\lambda_{M+1-j}\left(H_{12}\right)}}{\sum_{j=1}^{M}\frac{1}{\lambda_M\left(H_{12}\right)}} = \frac{\lambda_M\left(H_{12}\right)}{M\lambda_{M+1-j}\left(H_{12}\right)}.
\end{align*}For $j=1$, the above lowerbound is equal to $\frac{1}{M}$, and for $j=2,3,\cdots, M$ the above lowerbound is in turn trivially lowerbounded by $0$. Hence, we obtain the inequality in (\ref{eqn-div21})$(b)$, and the approximation in (\ref{eqn-div21})$(c)$ holds good at high values of $P$, where the constant $c=\frac{(2\sigma^2M)^M}{c_1^M\lambda^M_M\left(\triangle X'_{11}\right)}$.
\end{IEEEproof}

\section{Proof of Lemma \ref{lem1}}
\label{appen_lem1}
\begin{IEEEproof}
We prove that for every difference matrix there exists atmost a finite number of values of $\theta$ for which it is not full-rank. Since there are infinite possible values of $\theta$, there always exists $\theta$ such that all the difference matrices are full-rank. 

Without loss of generality, we consider the difference matrix $\triangle {X'^{k_1,k_2}_{11}}$ for some $k_1\neq k_2$. Let the entries of the difference matrix be given by (\ref{eqn-diff_mat}) (at the top of the next page).
\begin{figure*}
\begin{align}\label{eqn-diff_mat}
   \triangle {X'^{k_1,k_2}_{11}}= \begin{bmatrix}
        \triangle x^{1R}_{11}+j\triangle x^{3I}_{11} & -\triangle x^{2R}_{11}+j\triangle x^{4I}_{11} &e^{j\theta}\left(\triangle x^{5R}_{11}+j\triangle x^{6I}_{11}\right) & e^{j\theta}\left(-\triangle x^{3R}_{11}+j\triangle x^{1I}_{11}\right) \\
	\triangle x^{2R}_{11}+j\triangle x^{4I}_{11} &  \triangle x^{1R}_{11}-j\triangle x^{3I}_{11} & e^{j\theta}\left(\triangle x^{4R}_{11}+j\triangle x^{2I}_{11}\right) & e^{j\theta}\left(\triangle x^{5R}_{11}-j\triangle x^{6I}_{11}\right) \\
	e^{j\theta}\left(\triangle x^{6R}_{11}+j\triangle x^{5I}_{11}\right) & e^{j\theta}\left(-\triangle x^{6R}_{11}+j\triangle x^{5I}_{11}\right) & \triangle x^{3R}_{11}+j\triangle x^{1I}_{11} & -\triangle x^{4R}_{11}+j\triangle x^{2I}_{11}\\
	\end{bmatrix}
\end{align}
\hrule
\end{figure*}Consider the matrices $A,B \in \mathbb C^{3\times 3}$ comprised of the first three columns and the last three columns of $\triangle {X'^{k_1,k_2}_{11}}$ respectively. Expanding along the last column, the determinant of the matrix $A$ is given by (\ref{eqn-det_A}).
\begin{figure*} \small
\begin{align}
\nonumber
|A|=&e^{2j\theta}\left(\triangle x^{5R}_{11}+j\triangle x^{6I}_{11}\right)\left(\left(\triangle x^{2R}_{11}+j\triangle x^{4I}_{11} \right)\left(-\triangle x^{6R}_{11}+j\triangle x^{5I}_{11}\right) -\left(\triangle x^{1R}_{11}-j\triangle x^{3I}_{11}\right) \left(\triangle x^{6R}_{11}+j\triangle x^{5I}_{11}\right)\right)\\
\label{eqn-det_A}
&-e^{2j\theta}\left(\triangle x^{4R}_{11}+j\triangle x^{2I}_{11}\right)\left(\left(\triangle x^{1R}_{11}+j\triangle x^{3I}_{11} \right)\left(-\triangle x^{6R}_{11}+j\triangle x^{5I}_{11}\right) -\left(-\triangle x^{2R}_{11}+j\triangle x^{4I}_{11}\right) \left(\triangle x^{6R}_{11}+j\triangle x^{5I}_{11}\right)\right)\\
\nonumber
&+\left(\triangle x^{3R}_{11}+j\triangle x^{1I}_{11}\right)\left(\triangle {x^{1R}_{11}}^2+{\triangle x^{3I}_{11}}^2+{\triangle x^{2R}_{11}}^2+{\triangle x^{4I}_{11}}^2\right)\\
\nonumber
|B|=&e^{j\theta}\left(-\triangle x^{2R}_{11}+j\triangle x^{4I}_{11}\right)\left(\left(\triangle x^{4R}_{11}+j\triangle x^{2I}_{11} \right)\left(-\triangle x^{4R}_{11}+j\triangle x^{2I}_{11}\right) -\left(\triangle x^{5R}_{11}-j\triangle x^{6I}_{11}\right) \left(\triangle x^{3R}_{11}+j\triangle x^{1I}_{11}\right)\right)\\
\label{eqn-det_B}
&-e^{j\theta}\left(\triangle x^{1R}_{11}-j\triangle x^{3I}_{11}\right)\left(\left(\triangle x^{5R}_{11}+j\triangle x^{6I}_{11} \right)\left(-\triangle x^{4R}_{11}+j\triangle x^{2I}_{11}\right) -\left(-\triangle x^{3R}_{11}+j\triangle x^{1I}_{11}\right) \left(\triangle x^{3R}_{11}+j\triangle x^{1I}_{11}\right)\right)\\
\nonumber
&+e^{2j\theta}\left(-\triangle x^{6R}_{11}+j\triangle x^{5I}_{11}\right)\left(\left(\triangle {x^{5R}_{11}}+j\triangle {x^{6I}_{11}}\right)\left(\triangle {x^{5R}_{11}}-j\triangle {x^{6I}_{11}}\right)-\left(-\triangle {x^{3R}_{11}}+j\triangle {x^{1I}_{11}}\right)\left(\triangle {x^{4R}_{11}}+j\triangle {x^{2I}_{11}}\right)\right)
\end{align}
\hrule
\end{figure*} Expanding along the first column, the determinant of the matrix $B$ is given by (\ref{eqn-det_B}). Since it is assumed that the CPD of the constellation involved is non-zero, $\triangle x^{iR}_{11}$ and $\triangle x^{iI}_{11}$ are either both zero or both non-zero, for some $i$. Now, consider the following cases.

\textbf{Case 1:} {\small$\left(\triangle x^{1R}_{11},\triangle x^{3R}_{11}\right)$$=$$(0,0)$} and {\small$\left(\triangle x^{5R}_{11},\triangle x^{6R}_{11}\right)$$=$$(0,0)$}. Here, the determinant of the matrix $B$ is given by
\begin{align*}
|B|=e^{j\theta}\left(-\triangle x^{2R}_{11}+j\triangle x^{4I}_{11}\right)\left(-{\triangle x^{4R}_{11}}^2-{\triangle x^{2I}_{11}}^2\right). 
\end{align*} Since $k_1\neq k_2$, either $\triangle x^{2R}_{11}$ or $\triangle x^{4I}_{11}$ or both of them are non-zero. Hence, $|B|\neq 0$ and $\triangle {X'^{k_1,k_2}_{11}}$ is of rank $3$.

\textbf{Case 2:} {\small$\left(\triangle x^{1R}_{11},\triangle x^{3R}_{11}\right)$$\neq$$(0,0)$} and {\small$\left(\triangle x^{5R}_{11},\triangle x^{6R}_{11}\right)$$=$$(0,0)$}. The determinant of the matrix $A$ is given by

{\small
\begin{align*}
 |A|=\left(\triangle x^{3R}_{11}+j\triangle x^{1I}_{11}\right)\left(\triangle {x^{1R}_{11}}^2+{\triangle x^{3I}_{11}}^2+{\triangle x^{2R}_{11}}^2+{\triangle x^{4I}_{11}}^2\right).
\end{align*}}Since $\triangle x^{3R}_{11}$ or $\triangle x^{1I}_{11}$ or both are non-zero, $|A| \neq 0$ for this case. Hence,  $\triangle {X'^{k_1,k_2}_{11}}$ is of rank $3$.

\textbf{Case 3:} {\small$\left(\triangle x^{1R}_{11},\triangle x^{3R}_{11}\right)$$=$$(0,0)$} and {\small$\left(\triangle x^{5R}_{11},\triangle x^{6R}_{11}\right)$$\neq$$(0,0)$}. In this case, the coefficient of $e^{2j\theta}$ in the determinant of the matrix $B$ is given by {\small$\left(-\triangle x^{6R}_{11}+j\triangle x^{5I}_{11}\right)\left({\triangle x^{5R}_{11}}^2+{\triangle x^{6I}_{11}}^2\right) \neq 0$}. Now, $|B|$ is a quadratic polynomial in $e^{j\theta}$ which can have atmost two roots for $e^{j\theta}$ and hence, atmost a finite number of values of $\theta$ for which $|B|=0$. Therefore, there exists infinite values of $\theta$ for which $|B| \neq 0$ in this case.

\textbf{Case 4:} {\small$\left(\triangle x^{1R}_{11},\triangle x^{3R}_{11}\right)$$\neq$$(0,0)$} and {\small$\left(\triangle x^{5R}_{11},\triangle x^{6R}_{11}\right)$$\neq$$(0,0)$}. If the first two terms of $|A|$ given in (\ref{eqn-det_A}) do not sum to zero then, $|A|$ is clearly a quadratic polynomial in $e^{j\theta}$. Thus, there exist infinite values of $\theta$ for which $|A|$ is non-zero. If the first two terms of $|A|$ sum to zero then,  $|A| \neq 0$ for the same reason as in Case $2$. Hence, $\triangle {X'^{k_1,k_2}_{11}}$ is of rank $3$ in this case also.
\end{IEEEproof}

\section{Proof of Theorem \ref{thm2}}
\label{appen_thm2}
\begin{IEEEproof}
 Consider a modified system where a Gaussian noise matrix is added to $Y'_1$ so that the entries of the effective noise matrix in (\ref{eqn-after_int_cal}) are distributed as i.i.d. ${\cal CN}(0,2)$. Define the matrices $H$ and $G$ by $H=H_{11}V_{11}$ and $G=H_{21}V_{21}$. The entries of the matrices $H$ and $G$ are denoted by $h_{ij}$ and $g_{ij}$ respectively, for $i,j=1,2,3$. Let the symbols $x^{k}_{i1}$ take values from the distribution ${\cal CN}(0,1)$. A processed and vectorized version of the first two columns of $Y'_1$ is given by (\ref{eqn-vec_y'_1})
 \begin{figure*}
 \begin{align}\label{eqn-vec_y'_1}
 \begin{bmatrix}
y'_{1_{11}}\\
\overline{y'_{1_{12}}}\\
y'_{1_{21}}\\
\overline{y'_{1_{22}}}\\
y'_{1_{31}}\\
\overline{y'_{1_{32}}}
\end{bmatrix}=\underbrace{
 \begin{bmatrix}
 h_{11} & h_{12} & e^{j\theta}h_{13} & g_{11} & g_{12} & e^{j\theta}g_{13}\\
 \overline{h_{12}} & -\overline{h_{11}} & -e^{-j\theta}\overline{h_{13}} & \overline{g_{12}} & -\overline{g_{11}} & -e^{-j\theta}\overline{g_{13}}\\
 h_{21} & h_{22} & e^{j\theta}h_{23} & g_{21} & g_{22} & e^{j\theta}g_{23}\\
 \overline{h_{22}} & -\overline{h_{21}} & -e^{-j\theta}\overline{h_{23}} & \overline{g_{22}} & -\overline{g_{21}} & -e^{-j\theta}\overline{g_{23}}\\
 h_{31} & h_{32} & e^{j\theta}h_{33} & g_{31} & g_{32} & e^{j\theta}g_{33}\\
 \overline{h_{32}} & -\overline{h_{31}} & -e^{-j\theta}\overline{h_{33}} & \overline{g_{32}} & -\overline{g_{31}} & -e^{-j\theta}\overline{g_{33}}
 \end{bmatrix}}_{R}\begin{bmatrix}
 p^1_{11}\\
 p^2_{11}\\
 p^3_{11}\\
 p^1_{21}\\
 p^2_{21}\\
 p^3_{21}
 \end{bmatrix} + N''_1
 \end{align} \hrule
 \end{figure*}where, $p^1_{i1}=x^{1R}_{i1}+jx^{3I}_{i1}$, $p^2_{i1}=x^{2R}_{i1}+jx^{4I}_{i1}$, $p^3_{i1}=x^{6R}_{i1}+jx^{5I}_{i1}$, and $y'_{1_{ij}}$ denotes the $i^{\text{th}}$ row, $j^{\text{th}}$ column element of $Y'_1$. Define the sub-matrices of the effective transfer matrix $R$ defined in (\ref{eqn-vec_y'_1}) by
 
{\footnotesize\begin{align*}
 &A_1= \begin{bmatrix}
h_{11} & h_{12} & e^{j\theta}h_{13}\\
\overline{h_{12}} & -\overline{h_{11}} & -e^{-j\theta}\overline{h_{13}}\\
h_{21} & h_{22} & e^{j\theta}h_{23} 
 \end{bmatrix}, B_1=\begin{bmatrix}
 g_{11} & g_{12} & e^{j\theta}g_{13}\\
 \overline{g_{12}} & -\overline{g_{11}} & -e^{-j\theta}\overline{g_{13}}\\
 g_{21} & g_{22} & e^{j\theta}g_{23}
 \end{bmatrix}\\
 &C_1= \begin{bmatrix} \overline{h_{22}} & -\overline{h_{21}} & -e^{-j\theta}\overline{h_{23}}\\
 h_{31} & h_{32} & e^{j\theta}h_{33}\\
 \overline{h_{32}} & -\overline{h_{31}} & -e^{-j\theta}\overline{h_{33}}
 \end{bmatrix},D_1= \begin{bmatrix}
 \overline{g_{22}} & -\overline{g_{21}} & -e^{-j\theta}\overline{g_{23}}\\
 g_{31} & g_{32} & e^{j\theta}g_{33}\\
 \overline{g_{32}} & -\overline{g_{31}} & -e^{-j\theta}\overline{g_{33}}
 \end{bmatrix}.
 \end{align*}}If it is shown that the matrix $R$ is almost surely full-rank for any value of $\theta$ then, the symbols $p^k_{i1}$, for $k=1,2,3$, can be decoded symbol-by-symbol, by zero-forcing the rest of the symbols almost surely. If it is proven that the determinant of $R$ is a non-zero polynomial in $h_{ij}$ and $g_{ij}$, $i,j=1,2,3$, for any value of $\theta$ then, the determinant is non-zero almost surely. This is because $h_{ij}$ and $g_{ij}$ are non-zero rational polynomial functions in the entries of the matrices $H_{ij}$, for $i,j=1,2$, which are continuously distributed. We now prove this by showing that, for any value of $\theta$, there exists an assignment of values to $h_{ij}$ and $g_{ij}$ such that the determinant of $R$ is non-zero\footnote{If the determinant of $R$ is a zero polynomial in $h_{ij}$ and $g_{ij}$, $i,j=1,2,3$, for some value of $\theta$ then, for any assignment to $h_{ij}$ and $g_{ij}$ the determinant would be equal to zero for that value of $\theta$.}.  Consider the 
following assignment of values to $h_{ij}$ and $g_{ij}$.
 \begin{align*}
 H=\begin{bmatrix}
    1 & 0 & 0\\
    0 & 1 & 1\\
    1 & 0 & 1
   \end{bmatrix},~G=\begin{bmatrix}
    0 & 0 & 0\\
    1 & 0 & 0\\
    1 & -e^{2j\theta} & 1
   \end{bmatrix}.
 \end{align*}The determinant of the matrix $R$ can be evaluated to be
 \begin{align*}
  |R|=|A||D-CA^{-1}B|=-2 \neq 0.
 \end{align*}Thus, the symbols $p^k_{i1}$, for $k=1,2,3$, can be decoded symbol-by-symbol almost surely, for any value of $\theta$.
 
 We still need to decode the symbols $p^k_{i1}$, $k=4,5,6$, given by $p^4_{i1}=x^{5R}_{i1}+jx^{6I}_{i1}$, $p^5_{i1}=x^{3R}_{i1}+jx^{1I}_{i1}$, and $p^6_{i1}=x^{4R}_{i1}+jx^{2I}_{i1}$. Consider a processed and vectorized version of the last two columns of $Y'_1$, given by (\ref{eqn-vec_y'_1_last_sym}).
 \begin{figure*}
 \begin{align}\label{eqn-vec_y'_1_last_sym}
 e^{-j\theta}
 \begin{bmatrix}
y'_{1_{13}}\\
\overline{y'_{1_{14}}}\\
y'_{1_{23}}\\
\overline{y'_{1_{24}}}\\
y'_{1_{33}}\\
\overline{y'_{1_{34}}}
\end{bmatrix}=\underbrace{
 \begin{bmatrix}
 h_{11} & e^{-j\theta}h_{13} & h_{12} & g_{11} & e^{-j\theta}g_{13} & g_{12}\\
 \overline{h_{12}} & -\overline{h_{11}} & -e^{-j\theta}\overline{h_{13}} & \overline{g_{12}} & -\overline{g_{11}} & -e^{-j\theta}\overline{g_{13}}\\
 h_{21} & e^{-j\theta}h_{23} & h_{22} & g_{21} & e^{-j\theta}g_{23} & g_{22}\\
 \overline{h_{22}} & -\overline{h_{21}} & -e^{-j\theta}\overline{h_{23}} & \overline{g_{22}} & -\overline{g_{21}} & -e^{-j\theta}\overline{g_{23}}\\
 h_{31} & e^{-j\theta}h_{33} & h_{32} & g_{31} & e^{-j\theta}g_{33} & g_{32}\\
 \overline{h_{32}} & -\overline{h_{31}} & -e^{-j\theta}\overline{h_{33}} & \overline{g_{32}} & -\overline{g_{31}} & -e^{-j\theta}\overline{g_{33}}\\
 \end{bmatrix}}_{S}\begin{bmatrix}
 p^4_{11}\\
 p^5_{11}\\
 p^6_{11}\\
 p^4_{21}\\
 p^5_{21}\\
 p^6_{21}
 \end{bmatrix} + N'''_1
 \end{align} \hrule
 \end{figure*}To prove that the symbols $p^k_{i1}$, for $k=4,5,6$, can be decoded symbol-by-symbol almost surely for any value of $\theta$, we need to show that there exists an assignment to $h_{ij}$ and $g_{ij}$ such that the determinant of the matrix $S$ defined in (\ref{eqn-vec_y'_1_last_sym}) is non-zero for any given value of $\theta$. So, consider the following assignment to $h_{ij}$ and $g_{ij}$.
  \begin{align*}
 H=I_3,~G=\begin{bmatrix}
    0 & 0 & 0\\
    1 & 0 & 0\\
    0 & 1 & 3-e^{j\theta}
   \end{bmatrix}.
 \end{align*}It can be verified that the determinant of $S$ is equal to $|S|=3\left(3-e^{j\theta}\right)\neq 0$, for any value of $\theta$. 
\end{IEEEproof}

\end{document}